\newcommand{\fmsdonly}[1]{} %
\newcommand{\ieeeonly}[1]{}  %
\newcommand{\artonly}[1]{#1}  %
\newcommand{\artfmsdonly}[1]{#1}  %
\newcommand{\fullonly}[1]{#1} 
\newcommand{\shortonly}[1]{} 
\newcommand{\journal}[1]{#1} %
\newcommand{\thesis}[1]{}
\newcommand{\unused}[1]{}
\SetMathAlphabet{\mathrm}{bold}{T1}{times}{b}{n}
\SetMathAlphabet{\mathit}{bold}{T1}{times}{b}{it}
\SetMathAlphabet{\mathbf}{bold}{T1}{times}{b}{n}
\SetMathAlphabet{\mathtt}{bold}{OT1}{pcr}{b}{n}
\renewcommand\boldmath{\@nomath\boldmath\mathversion{bold}}}
\def\BibTeX{{\rm B\kern-.05em{\sc i\kern-.025em b}\kern-.08em
    T\kern-.1667em\lower.7ex\hbox{E}\kern-.125emX}}
\newtheorem{theorem}{Theorem}
\newtheorem{lemma}{Lemma}%
\newtheorem{definition}{Definition}%
\newcommand{\name}{Bb-Simplex\xspace}
\newcommand{\delete}[1]{}
\begin{document}

\newcommand{\acks}{This work was supported in part by National Science Foundation awards 
ITE-2134840, ITE-2040599, %
CCF-1954837, %
CCF-1918225, %
and CPS-1446832. %
}

\ieeeonly{\doi{TBD}}

\shortonly{
\title{A Barrier Certificate-based Simplex Architecture for Systems with Approximate and Hybrid Dynamics}
}

\fullonly{
\title{A Barrier Certificate-based Simplex Architecture for Systems with Approximate and Hybrid Dynamics}
}

\fmsdonly{
\author[1]{\fnm{Amol} \sur{Damare}}\email{adamare@cs.stonybrook.edu}
\author[1]{\fnm{Shouvik} \sur{Roy}}\email{sroy@cs.stonybrook.com}
\author[2]{\fnm{Roshan} \sur{Sharma}}\email{roshan.sharma@comed.com}
\author[2]{\fnm{Keith} \sur{DSouza}}\email{keith.dsouza@comed.com}
\author[1]{\fnm{Scott A.} \sur{Smolka}}\email{sas@cs.stonybrook.com}
\author*[1]{\fnm{Scott D.} \sur{Stoller}}\email{stoller@cs.stonybrook.edu}

\affil[1]{\orgdiv{Department of Computer Science}, \orgname{Stony Brook University}, \orgaddress{%
\city{Stony Brook}, %
\state{NY}, \country{USA}}}

\affil[2]{\orgdiv{Smart Grid---Emerging Technology}, \orgname{Commonwealth Edison}, \orgaddress{%
\city{Oakbrook Terrace}, %
\state{IL}, \country{USA}}}
}

\artonly{
\author[1]{Amol Damare}
\author[1]{Shouvik Roy}
\author[2]{Roshan Sharma}
\author[2]{Keith DSouza}
\author[1]{Scott A.\ Smolka}
\author[1]{Scott D.\ Stoller}
\affil[1]{Department of Computer Science, Stony Brook University}
\affil[2]{Smart Grid---Emerging Technology Commonwealth Edison}
}

\ieeeonly{
\author{\uppercase{Amol Damare}\authorrefmark{1},
\uppercase{Shouvik Roy}\authorrefmark{1},
\uppercase{Roshan Sharma}\authorrefmark{2},
\uppercase{Keith DSouza}\authorrefmark{2},
\uppercase{Scott A.\ Smolka}\authorrefmark{1}, and
\uppercase{Scott D.\ Stoller}\authorrefmark{1}
\address[1]{Department of Computer Science, Stony Brook University}
\address[2]{Smart Grid---Emerging Technology Commonwealth Edison}
}
\tfootnote{\acks}
\corresp{Corresponding author: Scott D.\ Stoller (e-mail: stoller@cs.stonybrook.edu).}
}

\newcommand{\abstracttext}{We present \emph{Barrier-based Simplex}
(\name), a new, provably correct design for runtime assurance of continuous dynamical systems. \name is centered around the Simplex control architecture, which consists of a high-performance \emph{advanced controller} that is not guaranteed to maintain safety of the plant, a verified-safe \emph{baseline controller}, and a \emph{decision module} that switches control of the plant between the two controllers to ensure safety without sacrificing performance.
In \name, \emph{Barrier certificates} are used to prove that the baseline controller ensures safety.  Furthermore, \name features a new automated method for deriving, from the barrier certificate, the conditions for switching between the controllers.  Our method is based on the Taylor expansion of the barrier certificate and yields computationally inexpensive switching conditions. 

We also propose extensions to \name to enable its use in \emph{hybrid systems}, which have multiple modes each with its own dynamics, and to support its use when only \emph{approximate dynamics} (not exact dynamics) are available, for both continuous-time and hybrid dynamical systems. 

We consider significant applications of \name to microgrids featuring advanced controllers in the form of neural networks trained using reinforcement learning.  These microgrids are modeled in RTDS, an industry-standard high-fidelity, real-time power systems simulator.  Our results demonstrate that \name can automatically derive switching conditions for complex continuous-time and hybrid systems, the switching conditions are not overly conservative, and \name ensures safety even in the presence of adversarial attacks on the neural controller when only approximate dynamics (with an error bound) are available.
}

\fmsdonly{\abstract{\abstracttext}}
\fmsdonly{\keywords{Barrier Certificates, Runtime Assurance, Microgrid Control, Neural Simplex Architecture, Deep Learning}}

\ieeeonly{
\begin{abstract}
\abstracttext
\end{abstract}
\begin{keywords}
Barrier Certificates, Runtime Assurance, Microgrid Control, Neural Simplex Architecture, Deep Learning
\end{keywords}
\titlepgskip=-21pt
}

\maketitle

\artonly{
\begin{abstract}
\abstracttext
\end{abstract}}

\section{Introduction}      
\label{sec:intro}
\label{ch:Intro}
\emph{Barrier certificates} (BaCs)~\cite{prajna2004,prajna2006} are a powerful method for verifying the safety of continuous dynamical systems without explicitly computing the set of reachable states.  A BaC is a function of the state satisfying a set of inequalities on the value of the function and value of its time derivative along the dynamic flows of the system.  Intuitively, the zero-level-set of a BaC forms a ``barrier'' between the reachable states and unsafe states. Existence of a BaC assures that starting from a state where the BaC is positive, safety is forever maintained~\cite{magnus2015,prajna2006,prajna2004}. Moreover, there are automated methods to synthesize BaCs, e.g.,~\cite{Kundu2019,permissiveBC,Zhao2020,Meng2021}.

Proving safety of plants with complex controllers is difficult with any formal verification technique, including barrier certificates.  As we now show, however, BaCs can play a crucial role in applying the well-established Simplex Control Architecture~\cite{Sha1998,LSha2001} to provide provably correct runtime safety assurance for systems with complex controllers.

We present \emph{Barrier-based Simplex}
(\name), a new, provably correct design for runtime assurance of continuous dynamical systems (see~\cite{Damare2022} for an earlier conference version of this work). \name is centered around the Simplex Control Architecture, which consists of a high-performance \emph{advanced controller} (AC) that is not guaranteed to maintain safety of the plant, a verified-safe \emph{baseline controller} (BC), and a \emph{decision module} that switches control of the plant between the two controllers to ensure safety without sacrificing performance.
In \name, \emph{Barrier certificates} are used to prove that the baseline controller ensures safety.  Furthermore, \name features a new scalable (relative to existing methods that require reachability analysis, e.g., \cite{bak2010,bak2011,johnson2016_rsl}) and automated method for deriving, from the BaC, the conditions for switching between the controllers. Our method is based on the Taylor expansion of the BaC and yields computationally inexpensive
switching conditions.

We also present multiple extensions to \name. First, we extend \name to allow the use of hybrid systems. We propose and prove correctness of the switching condition based on checking the Taylor approximation of the BaC for each reachable mode of the hybrid system in the next control period.  Second, we extend \name to support the use of approximate dynamics in the underlying system. This extension is especially helpful when the system is too complex to model accurately or an analytical model is unavailable. We bound the errors of approximate dynamics using \emph{approximate trace conformance}(ATC), and then modify the definition of a BaC to include these bounds. We amend the training and verification methods proposed in~\cite{Zhao2021} according to the modified defintion of BaCs. With error bounds and modified BaCs, we propose a new switching condition for approximate dynamics and prove its correctness.  Finally, we combine these two extensions of \name (hybrid systems and approximate dynamics) to obtain an end-to-end runtime assurance framework that works for both continuous and hybrid systems, and supports the use of approximate dynamics.

We consider a significant application of \name, namely \emph{microgrid control}. A \emph{microgrid} is an integrated energy system comprising distributed energy resources and multiple energy loads operating as a single controllable entity in parallel to, or islanded from, the existing power grid~\cite{ton2012}.  The microgrid we consider features an advanced controller (for voltage control) in the form of a neural network trained using reinforcement learning. For this purpose, we use \name in conjunction with the \emph{Neural Simplex Architecture} (NSA)~\cite{dung2019_nsa}, where the AC is an AI-based \emph{neural controller} (NC). NSA also includes an \emph{adaptation module} (AM) for online retraining of the NC while the BC is in control.

The microgrids we consider are modeled in RTDS~\cite{rtds}, an industry-standard high-fidelity, real-time power systems simulator.  Our results demonstrate that \name can automatically derive switching conditions for complex systems, the switching conditions are not overly conservative, and \name ensures safety even in the presence of adversarial attacks on the neural controller.

\paragraph*{Architectural overview of \name}
Figure~\ref{fig:archi} shows the overall architecture of the
combined Barrier-based
Neural Simplex Architecture. The green part of the figure depicts our design methodology; the blue part illustrates NSA.
Given the BC, the required safety properties, and a dynamic model of the plant,
our methodology generates a BaC and then derives the switching condition from it.  The reinforcement learning module learns a high-performance NC based on the performance objectives encoded in the reward function.

This paper is an extended version of~\cite{Damare2022} with significant new material including the extension to hybrid systems, the extension to approximate dynamics, the combination of these extensions, and an all-new series of experiments evaluating \name and all its extensions using a much larger and more complex model of a real MG.

\thesis{
The structure of the rest of the paper is the following.  Section~\ref{sec:BaC} provides background material on barrier certificates.  Section~\ref{sec:DM} presents \name for continuous dynamical systems, including our new approach for deriving switching conditions from barrier certificates.   Sections~\ref{ch:ApprxDyn}, \ref{sec:hybrid}, and \ref{sec:approxHybrid} present the extensions to approximate dynamics, hybrid systems, and their combination, respectively.    
 Section~\ref{sec:EE_BCM} presents the results of applying \name to a realistic model of a microgrid that is operational in the Bronzeville neighbourhood of Chicago.  Section~\ref{sec:related} discusses related work. Section~\ref{sec:conclusion} offers concluding remarks.
}

\begin{figure}[t]
\centering
\includegraphics[width=0.99\columnwidth]{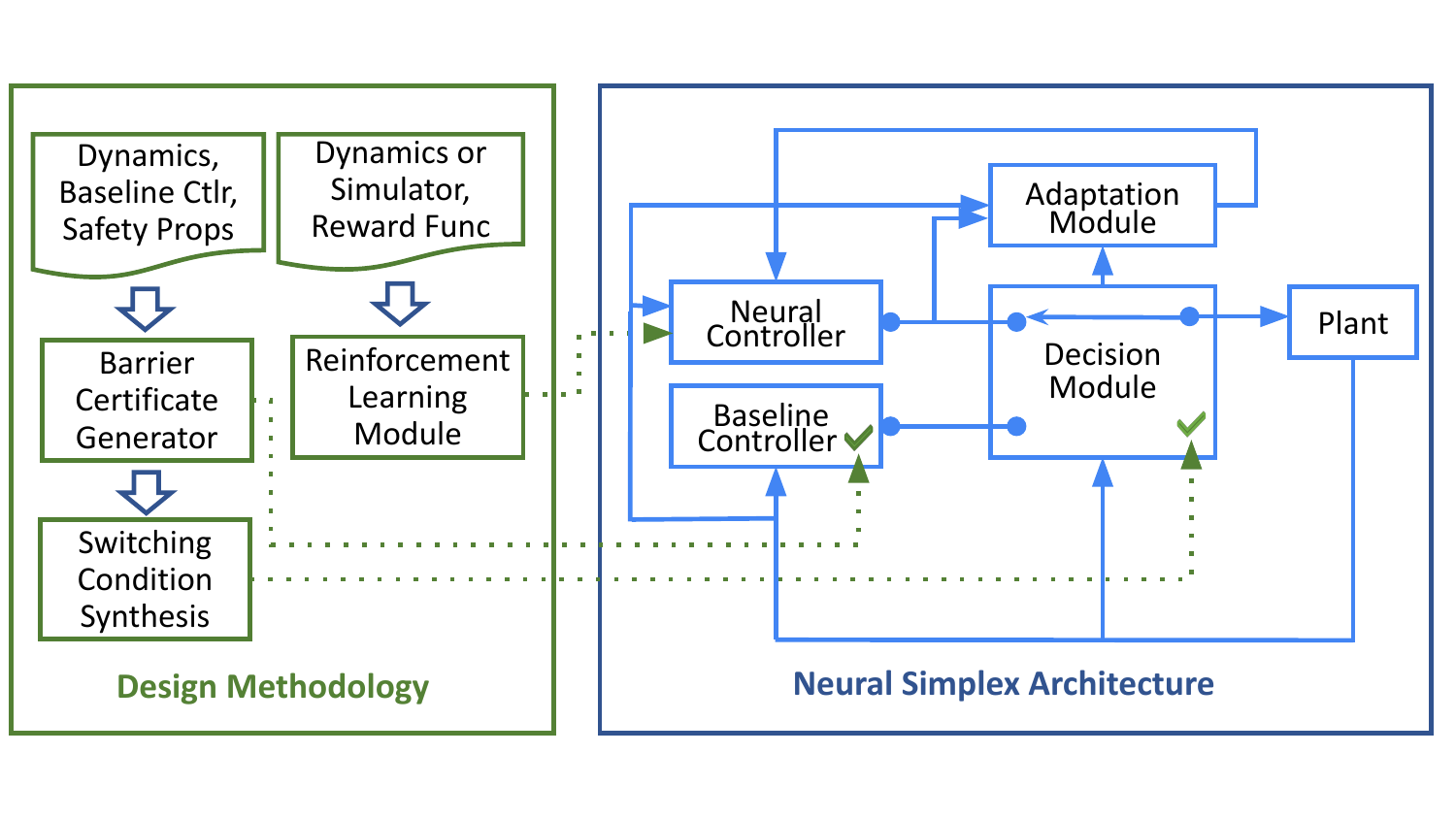}\hfill\vspace{-2.0ex}
\caption{Overview of the Barrier Certificate-based Neural Simplex Architecture}
\vspace*{-2ex}
\label{fig:archi}
\end{figure}

\section{Preliminaries}      
\label{sec:BaC}
\label{ch:BaC}

We use Barrier Certificates (BaCs) to prove that the BC ensures safety.  Also, our design methodology for computing switching conditions (see Section~\ref{sec:DM}) requires a BaC, but is independent of how the BaC is obtained.  We implemented two automated methods for BaC synthesis from the literature.  One method is based on sum-of-squares optimization (SOS)~\cite{Kundu2019,permissiveBC}.  Its applicability is limited to systems with polynomial dynamics.

The other method is \emph{SyntheBC}~\cite{Zhao2021}, which uses deep learning to synthesize a BaC. First, training samples obtained by sampling different areas  of the state space are used to train a feedforward ReLU neural network with two hidden layers as a candidate BaC.  Second, the validity of this candidate BaC must be verified.  The NN's structure allows the problem of checking whether the NN satisfies the defining conditions of a BaC to be transformed into mixed-integer linear programming (MILP) and mixed-integer quadratically-constrained programming (MIQCP) problems, which we solve using the Gurobi optimizer.  If the verification fails, the Gurobi optimizer provides counter-examples which can be used to guide retraining of the NN.  In this way, the training and verification steps can be iterated as needed.

\section{\name for Continuous-Time Dynamics}
\label{sec:DM}
\label{ch:DM}
\newcommand{\recov}{\mathcal{R}}
\newcommand{\zsls}{\mathcal{Z}}

\newcommand{\admis}{\mathcal{A}}

This \journal{Section}\thesis{Chapter} presents our novel methodology for deriving the switching logic from the BC's BaC for systems with continuous dynamics.
The Decision Module (DM) implements this switching logic for forward and reverse switching. When the Forward Switching Condition (FSC) is true, control is switched from the NC to the BC; likewise, when the reverse-switching condition (RSC) is true, control is switched from the BC to the NC. 
Consider a continuous dynamical system of the form:
\begin{equation}
\label{eqn:dynamics}
\dot{x} =f(x, g(x))
\end{equation}
where $x \in \mathbb{R}^k$ is the state of the system, $f$ is the dynamics of the physical system (a.k.a.\  the plant), and real-valued function $g$ is the control law of the controller.  The arguments of $f$ are a state $x$ and a control input $u$.  The controller $g$ produces a control action, given the current state $x$.  In \name, $g$ is instantiated with the BC's control law when deriving a BaC.  The set of all valid control actions 
is denoted by $\mathbb{U}$. The set of \emph{unsafe states} is denoted by $\mathcal{U}$. The set of \emph{initial states} is denoted by $\mathbb{I}$. Let $x_{lb}, x_{ub} \in \mathbb{R}^k$ be \emph{operational bounds} on the ranges of state variables, reflecting physical limits and simple safety requirements.

The set $\admis$ of \emph{admissible states} is given by: $\admis=\{ x:  x_{lb} \leq x \leq x_{ub} \}$. A state of the plant is \emph{recoverable} if the BC can take over in that state and keep the plant invariably safe. For a given BC, we denote the \emph{recoverable region} by $\recov$. Note that $\mathcal{U}$ and $\recov$ are disjoint. 
The safety of such a system under control of the BC can be established using a BaC $h(x):\mathbb{R}^k \rightarrow \mathbb{R}$ of the following form~\cite{prajna2004,prajna2006,permissiveBC,Kundu2019}:
\begin{definition}
\label{def:BaC}
\begin{equation}
\label{eq:bac}
    \begin{aligned}
        h(x) \leq 0, \quad &\forall x \in \mathbb{I} \\
        h(x) > 0, \quad &\forall x \in \mathcal{U}\\
        (\nabla_x h)^T f(x, g(x)) \leq 0,  \quad &\forall x \in \mathbb{R}^k ~\text{s.t.}~ h(x)=0
    \end{aligned}
\end{equation}
\end{definition}
The BaC is positive over the unsafe region and non-positive otherwise. $\nabla_x h$ is the gradient of $h$ w.r.t.\ $x$ and the expression $(\nabla_x h)^T f(x,g(x))$ is the time derivative of $h$. The zero-sub-level set of a BaC $h$ is $\zsls(h) =\{ x : h(x) < 0\}$. In~\cite{permissiveBC}, the invariance of this set is used to show $\zsls(h) \subseteq \recov$.

An FSC is a function of the current state $x$ and the control action $u$ output by the AC in that state. Let $\eta$ denote the control period a.k.a.\ time step. Let $\hat{h}(x,u,\delta)$ denote the degree-$n$ Taylor approximation of BaC $h$'s value after time $\delta$, if control action $u$ is taken in state $x$. The approximation is computed at the current time to predict $h$'s value after time $\delta$ and is given by: 
\begin{equation}
\hat{h}(x,u,\delta)=h(x)+\sum_{i=1}^n \frac{h^{i}(x,u)}{i!} \delta^{i}
\label{eqn:taylor}
\end{equation}
where $h^{i}(x,u)$ denotes the $i^{\rm th}$-time derivative of $h$ evaluated in state $x$ if control action $u$ is taken.  The control action is needed to calculate the time derivatives of $h$ from the definition of $h$ and Eq.~\ref{eqn:dynamics} by applying the chain rule.  
Since we are usually interested in predicting the value one time step $\eta$ in the future, we use $\hat{h}(x,u)$ as shorthand for
$\hat{h}(x,u,\eta)$.  
By Taylor's theorem with the Lagrange form of the remainder, the remainder error of the approximation $\hat{h}(x,u)$ is:
\begin{equation}
    \frac{h^{n+1}(x,u,\delta)}{(n+1)!} \eta^{n+1}\ \mbox{for some}\ \delta \in (0,\eta)
\end{equation}
An upper bound on the remainder error, if the state remains in the admissible region during the time interval, is:
\begin{equation}
\label{eqn:remainderError}
    \lambda(u)=\sup\left\{ \frac{|h^{n+1}(x,u)|}{(n+1)!} \eta^{n+1} : x \in \admis \right\}
\end{equation}

The FSC is based on checking recoverability during the next time step. For this purpose, the set $\admis$ of admissible states is shrunk by margins of $\mu_{\rm dec}$ and $\mu_{\rm inc}$, a vector of upper bounds on the amount by which each state variable can decrease and increase, respectively, in one time step, maximized over all admissible states.  Formally,
\begin{equation}
\begin{aligned}
\label{eqn:muOpt}
\mu_{\rm dec}(u)=|\min(0, \eta \dot{x}_{min}(u))|\\
\mu_{\rm inc}(u)=|\max(0, \eta \dot{x}_{max}(u))|
\end{aligned}
\end{equation}
where $\dot{x}_{min}$ and $\dot{x}_{max}$ are vectors of solutions to the optimization problems:
\begin{equation}
\begin{aligned}
\label{eqn:mu1}
\dot{x}_i^{min}(u) = \inf \{ \dot{x}_i(x,u): x \in \admis \}\\
\dot{x}_i^{max}(u) = \sup \{ \dot{x}_i(x,u): x \in \admis \}
\end{aligned}
\end{equation}
The difficulty of finding these extremal values depends on the complexity of the functions $\dot{x}_i(x,u)$.  For example, it is relatively easy if they are convex.  In both case studies described in Section~\ref{sec:EE_BCM}, they are multivariate polynomials of degree~1, and hence convex.
The set $\admis_r$ of \emph{restricted admissible states} is given by:
\begin{equation}
\label{eq:res_admis_reg}
\admis_r(u)=\{ x : x_{lb}+\mu_{\rm dec}(u) < x < x_{ub}-\mu_{\rm inc}(u) \}
\end{equation}

Let $\textit{Reach}_{=\eta}(x,u)$ denote the set of states reachable from state $x$ after exactly time $\eta$ if control action $u$ is taken in state $x$.  Let $\textit{Reach}_{\le \eta}(x,u)$ denote the set of states reachable from $x$ within time $\eta$ if control action $u$ is taken in state $x$.
\begin{lemma}
\label{lemma:admis}
For all $x \in \admis_r(u)$ and all control actions $u$, $\textit{Reach}_{\le \eta}(x,u) \subseteq \admis$.
\end{lemma}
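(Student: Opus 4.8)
The plan is to establish the inclusion by contradiction, via a first-exit-time argument reduced to a coordinatewise integral bound on the flow. First I would fix a control action $u$ and a state $x\in\admis_r(u)$, let $x(\cdot)$ be the (absolutely continuous) trajectory of Eq.~\ref{eqn:dynamics} with $x(0)=x$ under the constant input $u$, and suppose toward a contradiction that $x(t_0)\notin\admis$ for some $t_0\in(0,\eta]$. Since $\mu_{\rm dec}(u),\mu_{\rm inc}(u)\ge 0$ componentwise, the strict inequalities defining $\admis_r(u)$ give $x_{lb} < x(0) < x_{ub}$, so $x(0)$ lies in the interior of the closed set $\admis$; by continuity of $x(\cdot)$ the time $t^\ast := \inf\{t\in[0,\eta] : x(t)\notin\admis\}$ is then strictly positive, $x(s)\in\admis$ for all $s\in[0,t^\ast]$, and $x(t^\ast)$ lies on $\partial\admis$, i.e.\ $x_j(t^\ast)\in\{x_{lb,j},x_{ub,j}\}$ for some coordinate $j$.

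Next I would use that on $[0,t^\ast]$ the trajectory stays in $\admis$, so by the definitions in Eq.~\ref{eqn:mu1}, $\dot x_i^{min}(u)\le\dot x_i(x(s),u)\le\dot x_i^{max}(u)$ for every coordinate $i$ and all $s\in[0,t^\ast]$. Integrating from $0$ to $t^\ast$ gives $t^\ast\dot x_i^{min}(u)\le x_i(t^\ast)-x_i(0)\le t^\ast\dot x_i^{max}(u)$. A short case split on the signs of $\dot x_i^{max}(u)$ and $\dot x_i^{min}(u)$, together with $0\le t^\ast\le\eta$, then shows $t^\ast\dot x_i^{max}(u)\le\mu_{{\rm inc},i}(u)$ and $t^\ast\dot x_i^{min}(u)\ge -\mu_{{\rm dec},i}(u)$ with $\mu_{\rm dec},\mu_{\rm inc}$ as in Eq.~\ref{eqn:muOpt}, so that $x_i(0)-\mu_{{\rm dec},i}(u)\le x_i(t^\ast)\le x_i(0)+\mu_{{\rm inc},i}(u)$ for all $i$.

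Finally I would combine this with $x(0)=x\in\admis_r(u)$, i.e.\ $x_{lb,i}+\mu_{{\rm dec},i}(u) < x_i(0) < x_{ub,i}-\mu_{{\rm inc},i}(u)$, to conclude $x_{lb,i} < x_i(t^\ast) < x_{ub,i}$ for all $i$, so $x(t^\ast)$ lies in the interior of $\admis$ --- contradicting $x(t^\ast)\in\partial\admis$. Hence no exit time in $(0,\eta]$ exists, and since $x,u$ were arbitrary, $\textit{Reach}_{\le\eta}(x,u)\subseteq\admis$.

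The point requiring the most care will be the well-posedness of the trajectory on the full interval $[0,\eta]$ that the first-exit argument presupposes: as long as $x(\cdot)$ remains in the compact set $\admis$ the vector field $f(\cdot,u)$ is bounded, so the solution cannot escape in finite time before $\eta$, which is exactly what legitimizes the maximal-interval/first-exit bookkeeping (and is implicitly part of what the lemma asserts). The remaining ingredients --- the Fundamental Theorem of Calculus for the integral bound and the elementary sign analysis linking $t^\ast\dot x_i^{min}(u)$, $t^\ast\dot x_i^{max}(u)$ to $\mu_{\rm dec}(u)$, $\mu_{\rm inc}(u)$ via Eq.~\ref{eqn:muOpt} --- are routine.
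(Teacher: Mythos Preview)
Your proof is correct and follows essentially the same approach as the paper's: both bound the componentwise derivatives by $\dot{x}^{min}(u)$ and $\dot{x}^{max}(u)$ while the trajectory is in $\admis$, translate this into the displacement bounds $\mu_{\rm dec}(u)$ and $\mu_{\rm inc}(u)$, and conclude from the shrinking that defines $\admis_r(u)$. The paper's argument is an informal sketch that leaves implicit the circularity (the derivative bounds apply only while in $\admis$), whereas your first-exit-time formulation makes this rigorous; otherwise the content is the same.
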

\begin{proof}
The derivative of $x$ is bounded by $\dot{x}_{min}(u)$ and $\dot{x}_{max}(u)$ for all states in $\admis$.  This implies that $\mu_{\rm dec}$ and $\mu_{\rm inc}$ are the largest amounts by which the state $x$ can decrease and increase, respectively, during time $\eta$ as long as $x$ remains within $\admis$ during the time step. Since $\admis_r(u)$ is obtained by shrinking $\admis$ by 
$\mu_{\rm dec}$ and $\mu_{\rm inc}$ (i.e., by moving the lower and upper bounds, respectively, of each variable inwards by those amounts), the state cannot move outside of $\admis$ during time $\eta$. 
\end{proof}

\subsection{Forward Switching Condition}
\label{sec:fsc_dm}

To ensure safety, a Forward Switching Condition (FSC) should switch control from the NC to the BC if using the control action $u$ proposed by the NC causes any unsafe states to be reachable from the current state $x$ during the next control period, 
or causes any unrecoverable states to be reachable at the end of the next control period.
These two conditions are captured in the following  definition:
\begin{definition}%
\label{def:FSC}
A condition $FSC(x,u)$ is a {\em forward switching condition} if for every recoverable state $x$, every control action $u$, and control period $\eta$, $\textit{Reach}_{\leq \eta}(x,u) \cap \mathcal{U} \neq \emptyset \lor \textit{Reach}_{=\eta}(x,u) \not\subset \recov$ implies $FSC(x,u)$ is true.
\end{definition}

\begin{theorem}
A Simplex architecture whose forward switching condition satisfies Definition~\ref{def:FSC} keeps the system invariably safe provided the system starts in a recoverable state.
\end{theorem}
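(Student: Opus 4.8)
The plan is to prove the theorem by induction on the sequence of control periods, using as invariant the statement that the plant state at the start of every control period lies in the recoverable region $\recov$. The base case is exactly the hypothesis that the system starts in a recoverable state. For the inductive step I would assume the state $x_t$ at the start of the $t$-th control period satisfies $x_t \in \recov$ and establish two facts: (i) no unsafe state in $\mathcal{U}$ is visited during that control period, and (ii) the state $x_{t+1}$ at the start of the next period again lies in $\recov$. Since $\recov$ and $\mathcal{U}$ are disjoint, (i) holding for every period immediately yields that the entire trajectory avoids $\mathcal{U}$, i.e., invariant safety; (ii) is what keeps the induction (and the applicability of Definition~\ref{def:FSC}, which is stated only for recoverable states) going. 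Note that this is the abstract correctness result: it uses only the specification of the FSC from Definition~\ref{def:FSC}, not the later Taylor-based construction of a concrete FSC.

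For the inductive step I would case-split on the Decision Module's choice at time $t$, which is determined by the action $u$ that the NC proposes in state $x_t$. If $FSC(x_t,u)$ is false, the NC keeps control and $u$ is applied (held constant) throughout the period. Because $x_t \in \recov$, the contrapositive of Definition~\ref{def:FSC} applies and gives both $\textit{Reach}_{\leq \eta}(x_t,u) \cap \mathcal{U} = \emptyset$, which is (i), and $\textit{Reach}_{=\eta}(x_t,u) \subseteq \recov$, so $x_{t+1} \in \recov$, which is (ii). If instead $FSC(x_t,u)$ is true, control switches to the verified-safe BC in state $x_t$; invoking the defining property of $\recov$ — the BC can take over in $x_t$ and keep the plant invariably safe — gives (i) for this period (and in fact for all later ones), while (ii) holds because the one-step BC-successor $x_{t+1}$ inherits the tail of that BC-controlled safe-forever trajectory, so $x_{t+1} \in \recov$ as well.

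To close the argument cleanly I would address the reverse switching: while the BC remains in control the invariant is preserved trivially, and the RSC of a well-formed Simplex design hands control back to the NC only in a recoverable state (in our instantiation, a state in $\zsls(h) \subseteq \recov$), so the recoverability invariant — and with it the hypothesis needed to apply Definition~\ref{def:FSC} at the next decision point — is never violated under an arbitrary interleaving of NC and BC control. Chaining the inductive step over all control periods then yields that the trajectory never enters $\mathcal{U}$, which is the claimed invariant safety.

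I expect the main obstacle to be the BC branch of the inductive step, specifically making rigorous the passage from "the BC keeps $x_t$ invariably safe" to "$x_{t+1}$ is recoverable," and ensuring this meshes with subsequent DM decisions: this needs the observation that recoverability is a state property closed under BC-controlled evolution (a Markov/determinism property of the BC), together with the bookkeeping that guarantees the recoverability invariant is genuinely maintained across an arbitrary NC/BC interleaving rather than only along an initial prefix. The NC branch, by contrast, is essentially immediate once one unwinds the logical form of Definition~\ref{def:FSC}.
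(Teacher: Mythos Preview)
Your proposal is correct but takes a genuinely different route from the paper. The paper's own proof is essentially a citation: it observes that Definition~\ref{def:FSC} is modeled on the switching logic of Algorithm~1 in~\cite{yang17simplex}, invokes Theorem~1 of that reference for the case where the FSC equals the disjunction in Definition~\ref{def:FSC}, and then remarks that any logically weaker condition (one implied by that disjunction) also suffices. No inductive argument is spelled out in the paper itself.

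By contrast, you give a self-contained induction on control periods with the invariant $x_t \in \recov$, a case split on the DM's decision, the contrapositive of Definition~\ref{def:FSC} in the NC branch, and the closure of $\recov$ under BC-controlled evolution in the BC branch. This is more illuminating and does not require the reader to chase an external reference; it also makes explicit the one nontrivial ingredient---that recoverability is preserved along BC trajectories---which the paper leaves implicit inside the cited result. What the paper's approach buys is brevity and a clean reduction to an already-established theorem; what yours buys is a standalone argument whose assumptions (determinism of the BC, the Markov nature of recoverability, and that the NC's action is gated by the FSC whenever it is applied) are visible and auditable. Your identification of the BC branch and the RSC bookkeeping as the delicate point is accurate; the paper sidesteps this entirely by outsourcing it to~\cite{yang17simplex} and asserting that the RSC ``does not affect safety.''
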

\begin{proof}
Our definition of an FSC is based directly on the switching logic in Algorithm~1 of \cite{yang17simplex}. The proof of Theorem~1 in~\cite{yang17simplex} shows that an FSC that is exactly the disjunction of the two conditions in our definition invariantly ensures system safety.  It is easy to see that any weaker FSC also ensures safety. \shortonly{\hfill$\blacksquare$}
\end{proof}

We now propose a new and general procedure for constructing a switching condition from a BaC and prove its correctness.
\begin{theorem}
\label{thm:FSC}
Given a continuous-time system and a barrier certificate $h$ that verifies the safety of
the system under control of the BC, the following condition is a forward switching condition: 
\begin{equation}
\label{eq:FSC}
FSC(x,u) = \alpha  \lor  \beta   
\end{equation}

\noindent
where $\alpha \equiv \hat{h}(x,u) +\lambda(u)  \geq 0$ and
$\beta \equiv x \notin \admis_r(u)$

\end{theorem} 
\begin{proof}
Intuitively, $\alpha \lor \beta$ is an FSC because (1)~if condition $\alpha$ is false, then control action $u$ does not lead to an unsafe or unrecoverable state during the next control period, provided the state remains admissible during that period; and (2)~if condition $\beta$ is false, then the state will remain admissible during that period.  Thus, if $\alpha$ and $\beta$ are both false, then nothing bad can happen during the control period, and there is no need to switch to the BC.

Formally, suppose $x$ is a recoverable state, $u$ is a control action, and 
${\textit{Reach}_{\leq \eta}(x,u) \cap \mathcal{U} \neq \emptyset} \; \lor \; \textit{Reach}_{=\eta}(x,u) \not\subset \recov$, i.e., there is an unsafe state in $\textit{Reach}_{\leq \eta}(x,u)$ or an unrecoverable state in $\textit{Reach}_{=\eta}(x,u)$.  Let $x'$ denote that unsafe or unrecoverable state.  Recall that  $\zsls(h) \subseteq \recov$, and $\recov \cap \mathcal{U} = \emptyset$.  Therefore, $h(x',u) \le 0$.  We need to show that $\alpha \lor \beta$ holds.  We do a case analysis based on whether $x$ is in $\admis_r(u)$.

Case 1: $x \in \admis_r(u)$.
In this case, we use a lower bound on the value of the BaC $h$ to show that states reachable in the next control period are safe and recoverable.  Using Lemma~\ref{lemma:admis}, we have $\textit{Reach}_{\le \eta}(x,u)  \subseteq  \admis$. This implies that $\lambda(u)$, whose value, by definition, is maximized over $x\in\admis$, is an upper bound on the error in the Taylor approximation $\hat h(x,u,\delta)$ for $\delta \le \eta$.  This implies that $\hat{h}(x,u)-\lambda(u)$ is a lower bound on value of BaC for all states in $\textit{Reach}_{\le\eta}(x,u)$. As shown above, there is a state $x'$ in $\textit{Reach}_{\le\eta}(x,u)$ with $h(x',u) \le 0$.  $\hat{h}(x,u)-\lambda(u)$ is lower bound on $h(x',u)$ and hence must also be less than or equal to 0. Thus, $\alpha$ holds.

Case 2: $x \notin \admis_r(u)$. In this case, $\beta$ holds.  Note that in this case, the truth value of $\alpha$ is not significant (and not relevant, since $FSC(x,u)$ holds regardless), because the state might not remain admissible during the next control period.  Hence, the error bound obtained using  Eq.~\ref{eqn:remainderError} is not applicable. \shortonly{\hfill$\blacksquare$}
\end{proof}

\subsection{Reverse Switching Condition}
\label{sec:rsc_dm}
In the traditional Simplex approach, after the DM switches control to the BC, the BC remains in control forever.
Moreover, there are no techniques to correct the AC after it generates an unrecoverable control input. NSA~\cite{dung2019_nsa} addresses both of these limitations.  In terms of switching control back to the AC,  it introduced two approaches to the design of Reverse Switching Conditions (RSCs).  The first approach switches control back to NC if the FSC does not hold within a specified time horizon $T$ in a simulation of the upcoming behavior. The second approach switches control back to NC if the current plant state is farther than a specified threshold distance from the nearest state for which the FSC holds. Both  approaches avoid excessive switching between the NC and BC.  Note that heuristic approaches can safely be used for the design of the RSC, since the RSC does not affect safety of the system.  In terms of correcting the AC after it generates an unrecoverable control input, NSA uses online retraining of the AC (which is a neural controller in NSA) for this purpose. 

We design the RSC to hold if the FSC is likely to remain false for at least $m$ time steps, with $m>1$.  The RSC, like the FSC, is the disjunction of two conditions.
The first condition is $h(x) \ge m \eta |\dot{h}(x)|$, since $h$ is likely to remain non-negative for at least $m$ time steps if its current value is at least that duration times its rate of change.  The second condition ensures that the state will remain admissible for $m$ time steps.  In particular,
we take:
\begin{equation}
\label{eq:RSC}
RSC(x) = h(x) \geq m\eta|\dot{h}(x)| \,\land\, x \in \admis_{r,m},
\end{equation}
where the $m$-times-restricted admissible region is:
\begin{equation}
\label{eqn:admis_rm}
\admis_{r,m}=\{ x : x_{lb}+m \mu_{\rm dec} < x < x_{ub}- m \mu_{\rm inc} \},
\end{equation}
where vectors $\mu_{\rm dec}$ and $\mu_{\rm inc}$ are defined in the same way as $\mu_{\rm dec}(u)$ and $\mu_{\rm inc}(u)$ in Eqs.~\ref{eqn:muOpt} and~\ref{eqn:mu1} except with optimization over all control actions $u$.
\fullonly{An RSC that guarantees absence of forward switches for at least $m$ time steps can be designed by using the maximum of $\dot h(x)$ over the admissible region; however, this conservative approach might leave the BC in control longer than desired.}

\subsection{Decision Logic}

The DM's switching logic has three inputs: the current state $x$, the control action $u$ currently proposed by the NC, and the name $c$ of the controller currently in control (as a special case, we take $c=NC$ in the first time step).  The switching logic is defined by cases as follows: $DM(x,u,c)$ returns $BC$ if $c=NC$ $\land$ $FSC(x,u)$, returns $NC$ if $c=BC$ $\land$ $RSC(x)$, and returns $c$ otherwise.

\section{Extending \name to Approximate Dynamical Systems}
\label{ch:ApprxDyn}

\newcommand{\appr}[1]{\underline{#1}}

\newcommand{\admisnew}{\mathcal{A}}
\newcommand{\admisr}{\mathcal{A}_r}

When a detailed, accurate, physics-based dynamic model is unavailable, or is too complex to be analyzed directly using the methods described in the previous section, an alternative approach is to use an approximate dynamic model as the basis for \name.  For example, an approximate dynamic model in the form of neural ODEs can be learned from traces, as described in \journal{Section}\thesis{Chapter}~\ref{ch:Experiments_BCM}.  In order to obtain safety guarantees, error bounds quantifying the accuracy of the approximate dynamic model are needed.  This section presents the modifications to \name needed to apply it to an approximate dynamic model, taking the error bounds into account.  This includes modifications to the definition of a BaC, the algorithms for learning and verification of neural BaCs, and the derivation of switching conditions from BaCs.

First, we introduce some notation.  Let the actual dynamics of the underlying system be given by~(\ref{eqn:dynamics}),
whereas the approximate dynamics are given by:
\begin{equation}
    \label{eqn:approx}
    \dot{x}=\appr{f}(x, g(x))
\end{equation}
When deriving a BaC from the approximate dynamics, we strengthen the definition to require that the function be a BaC, in the usual sense, with respect to all of the dynamics that are consistent with the approximate dynamics; i.e., it should be a BaC for all of the dynamics within the error bounds of the approximate dynamics. Similarly, the switching condition should be safe with respect to all of the dynamics within the error bounds of the approximate dynamics. Note that a switching condition derived from am approximate model will generally switch to the baseline controller further in advance of a potential safety violation than a switching condition derived from the actual dynamics, and it may result in more unnecessary switches to baseline control. %
The magnitude of these effects depends on the error bounds.

We now define the error bounds for the approximate dynamics.
\begin{definition}
\label{def:derivativeerror}
\emph{Derivative error} $\epsilon_i$ is the maximum absolute error between the $i^{th}$-degree derivatives of the actual and approximate dynamics with respect to all admissible states $x$.
\end{definition}
\begin{equation}
\label{eqn:dererror}
    \epsilon_i= \sup \{|\frac{\partial^i f(x,g(x))}{\partial x^i} - \frac{\partial^i \appr{f}(x,g(x))}{\partial x^i}| : x \in \admisnew\} 
\end{equation}
where $\epsilon_i$ is a vector and the $|\cdot |$ operator computes the component-wise absolute value of a vector.
Note that
$\epsilon_0$ is the maximum absolute error between the actual and approximate dynamics over the domain of admissible states.
\begin{equation}
    \label{eqn:epsilon}
    \epsilon_0= \sup \{|f(x,g(x)) - \appr{f}(x,g(x))| : x \in \admisnew  \} 
\end{equation}
where $\admisnew$ is the set of admissible states, as defined in \journal{Section}\thesis{Chapter} \ref{sec:DM}.

\subsection{Definition, Learning and Verification of BaCs using Approximate Dynamics}

The error in the approximate dynamics affects the learning and verification of a BaC since the definition of a BaC given in Eq.~\ref{eq:bac} depends on the actual dynamics.  This \journal{Section}\thesis{Chapter} treats the impact of the error on the definition of the BaC, learning of neural BaCs, and verification of neural BaCs.

\paragraph*{Modified Definition of BaC}
To show that the definition of the BaC needs to be modified, suppose $\appr{h}$ is a function that satisfies the usual conditions in the definition of a BaC (Definition~\ref{def:BaC}), but now with respect to the approximate dynamics $\appr{f}$:
\begin{align}
    \appr{h}(x) \leq 0, & \quad \forall x \in \mathbb{I} \\
    \appr{h}(x)>0, & \quad \forall x \in \mathcal{U} \\
\label{eq:bak3}
    \frac{\partial \appr{h}(x)}{\partial x}\appr{f}(x,g(x))  \leq 0, & \quad \forall x~\text{s.t.}~ \appr{h}(x)=0
\end{align}
This function is not a BaC for the actual dynamics $f$ as it does not necessarily satisfy the final condition Eq.~\ref{eq:bak3}.
Therefore, we define a function $\appr{h}$ to be a BaC with respect to approximate dynamics $\appr{f}$ as follows:
\begin{definition}
\label{def:appr_bac}
A function $\appr{h}(x)$ is a barrier certificate for a system having approximate dynamics $\appr{f}$, with maximum absolute error $\epsilon_0$ between the actual and approximate dynamics, if it satisfies the following conditions:
\begin{align}
    \label{eqn:safecon} \appr{h}(x) \leq 0, & \quad \forall x \in \mathbb{I} \\
    \label{eqn:usafecon} \appr{h}(x) > 0, & \quad \forall x \in \mathcal{U} \\
\label{eqn:apprxDerCon} y\frac{\partial \appr{h}(x)}{\partial x} \leq 0 ,  & \quad 
\artfmsdonly{
\forall x: \appr{h}(x)=0,
\forall y \in [\appr{f}(x,g(x))-\epsilon_0 , \appr{f}(x,g(x))+\epsilon_0]}
\ieeeonly{
\begin{array}[t]{@{}l@{}}
\forall x: \appr{h}(x)=0,\\
\forall y \in [\appr{f}(x,g(x))-\epsilon_0 , \appr{f}(x,g(x))+\epsilon_0]
\end{array}}
\end{align}
\end{definition}
\begin{theorem}
\label{thm:approxCond}
If a function $\appr{h}:\mathbb{R}^d \longrightarrow \mathbb{R}$ satisfies Definition~\ref{def:appr_bac}, then $\appr{h}$ is a BaC for actual dynamics $f$.
\end{theorem}
\begin{proof}
A BaC for dynamics $f$ should be non-positive over all states in the initial region, and positive over the unsafe region. As seen from Eqs.~\ref{eqn:safecon}--\ref{eqn:usafecon}, $\appr{h}$ satisfies these conditions. Also, the BaC's time derivative should be non-increasing on the boundary. We will now prove that $\appr{h}$ also satisfies this condition.

The condition in Eq.~\ref{eqn:apprxDerCon} ensures that when the approximate dynamics are used, the derivative of $\appr{h}$ is non-increasing for all possible values that $f$ can take. Since approximate dynamics have $\epsilon$-uncertainty, that condition requires that the time derivative of $\appr{h}$ is non-increasing in a box  centered at $\appr{f}(x,g(x))$; i.e., $ y \in [\appr{f}(x,g(x))-\epsilon_0 , \appr{f}(x,g(x))+\epsilon_0 ]$. I.e., it requires that the third condition in the usual definition of a BaC holds for all dynamics that are within $\epsilon_0$ of $\appr{f}$. For all states in the admissible region $\admisnew$, the definition of $\epsilon_0$ implies  $\appr{f}(x,g(x))-\epsilon_0 \leq f(x,g(x)) \leq \appr{f}(x,g(x))+\epsilon_0$. Thus, $\appr{h}$ satisfies the defining conditions of a BaC for dynamics $f$.
\end{proof}

\journal{
The training and verification for the first two conditions of the BaC remain the same as with the actual dynamics. The loss functions for these two conditions are as follows~\cite{Zhao2020}:
$l_1=\sum_{x \in D_1}-min(N(x),0), l_2=\sum_{x \in D_2} max(N(x),0)$.
We modify the loss function corresponding to the third condition in the definition of a BaC. 
We define function $w(x,\epsilon_0)$ as follows:
\begin{equation}    w(x,\epsilon_0)=\frac{\partial N(x)}{\partial x}\appr{f}(x,g(x)) + |\frac{\partial N(x)}{\partial x}|\epsilon_0
\end{equation}
Given a state $x$, $w(x,\epsilon_0)$ captures the maximum possible value of the Lie derivative of $N$ for dynamics in the $\epsilon$-box around $\appr{f}(x,g(x))$. 
The original loss function (Eq.~15) given in~\cite{Zhao2020}   is: $l_3=\sum_{x \in D_3}max(\frac{\partial N}{\partial x} f(x,g(x)),0)$.
The new loss function $l_3$ corresponding to the third condition in the definition of a BaC using approximate dynamics $\appr{f}$ is given by:
\begin{equation}
    \label{eqn:newloss}
    l_3=\sum_{x \in D_3} max(w(x,\epsilon_0),0) 
\end{equation}
Thus, if the Lie derivative of $N$ is positive for any dynamics within the $\epsilon$-box around $\appr{f}(x)$, it results in a loss.
}

\thesis{
\paragraph*{Training of Approximate Neural BaC: }
Now we will discuss the training of $\appr{h}$. The training and verification of the first two conditions of the BaC remain the same as with the actual dynamics. In SyntheBC~\cite{Zhao2020}, the loss function for the third condition is given by:
 \begin{equation}
    \label{eqn:origloss}
    l_3=\sum_{x \in D_3} max(\frac{\partial N}{\partial x}f(x),0) 
 \end{equation}
where, $N$ is a neural network representing a candidate BaC and $D_3$ is a dataset created for learning $N$, such that $D_3 = \{x:N(x)=0\}$. We will still apply the relaxation to this as discussed in~\cite{Zhao2020}. We want the value of this Lie  derivative to be non-positive for the actual dynamics.  This will be the case if it is non-positive for all dynamics in the $\epsilon$-box around $\appr{f}(x)$.  Thus, if the value may be positive anywhere in that box, it should result in a loss penalty. In order to determine this, we expand the dot product $\frac{\partial N}{\partial x}\appr{f}(x)$.
 \begin{equation}
     \frac{\partial N}{\partial x}\appr{f}(x)=\sum_{i=1}^d\frac{\partial N}{\partial x_i}\appr{f}_i(x)
 \end{equation}
where $d$ is the dimension of $x$. Since this dot product depends on the individual product of the $i^{th}$ components of $\frac{\partial N}{\partial x}$ and $\appr{f}$, we can maximize it by maximizing each term in the sum. Given a state $x$, we know the value of $\frac{\partial N}{\partial x_i}$, thus allowing us to choose either an upper or lower bound error to add to $\appr{f}_i$. We will now choose an error bound $\epsilon(i)$ for each $\appr{f}_i(x)$ using the value of $\frac{\partial N}{\partial x_i}$. We will use a vector variable $k$ of dimension $d$ with each term defined as:
 \begin{equation}
     k_i=
     \begin{cases}
     +1, & \text{if } \frac{\partial N}{\partial x_i} \geq 0 \\
     -1, & \text{otherwise}
    \end{cases}
 \end{equation}
 To maximize the dot product when $\frac{\partial N}{\partial x_i} \geq 0$, we need the other term in the product to be as large as possible. Hence, we choose to add an error bound $\epsilon$. Similarly, to maximize the dot product when $ \frac{\partial N}{\partial x_i} <0$, we need the other term to be as small as possible. Hence, we choose to subtract an error bound $\epsilon$ from $\appr{f}_i$. Note that the sign of $\appr{f}_i$ has no effect on this decision. The dot product is given by following equation:
 \begin{equation}
 \label{eqn:TempY}
      g(x)=\sum_{i=1}^d\frac{\partial N}{\partial x_i}(\appr{f}_i(x)+{k_i}\epsilon_0(i))\\
 \end{equation}
Let $\circ$ denote the Hadamard product (i.e., element-wise product) of two vectors. We can then rewrite Eq.~\ref{eqn:TempY} as follows:
\begin{equation}
    g(x)=\frac{\partial N}{\partial x}(\appr{f}(x)+k\circ\epsilon_0)
\end{equation}
Given a state $x$, $g(x)$ captures the maximum possible value of the Lie derivative of $N$ for dynamics in the $\epsilon$-box around $\appr{f}(x)$. 
The new loss function using $g$ is given by:
\begin{equation}
    \label{eqn:newloss_thesis}
    l_3=\sum_{x \in D_3} max(g(x),0) 
\end{equation}
Thus, if the Lie derivative of $N$ may be positive within the $\epsilon$-box around $\appr{f}(x)$, it  results in a loss.
}

For the verification of the Neural BaC, we modify the optimization problem in Eq.~38 of~\cite{Zhao2020} by replacing its original objective $v f(x,g(x))$ by $v y$, where $y$ is a new variable introduced to range over the values of $f(x,g(x))$ consistent with the approximate dynamics. %
The new optimization problem is as follows:
\begin{equation}
\begin{aligned}
   \label{eqn:approxpwlenc}
    p^*=\max \quad &y \, v \\
    \text{s.t.} \quad & x_0 \in D_3 \\
       \quad & v = \frac{\partial h}{ \partial x}\\
       \quad & y \geq \appr{f}(x, g(x))-\epsilon_0 \\
       \quad & y \leq \appr{f}(x,g(x))+\epsilon_0 \\
       \quad & x_1 = ReLU(W_1 \, x_0 + b_1)\\
     \quad & x_2 = ReLU(W_2 \, x_1 + b_2) \\
     \quad & x_3 = W_3 \, x_2 + b_3\\
     \quad & W_3 \, x_2 + b_3 = 0
\end{aligned}
\end{equation}
We have added a constraint on $y$ such that it lies in the interval $[\appr{f}(x,g(x))-\epsilon_0, \appr{f}(x,g(x))+\epsilon_0]$ instead of using the single value $\appr{f}(x,g(x))$. The verification of the remaining conditions is the same as given in~\cite{Zhao2020}. The derivative condition is verified if the optimum returned for this problem is non-positive. 
We have the following analogue of Theorem~11 in~\cite{Zhao2020}.
\begin{theorem}
\label{thm:pwlApprox}
Given a neural network $\appr{h}$ trained using approximate dynamics $\appr{f}$ with error bound $\epsilon_0$, if the optimal solution $p^*$ of the optimization problem in Eq.~\ref{eqn:approxpwlenc} is non-positive, then $\appr{h}$ is a BaC for the actual dynamics $f$.
\end{theorem}

\begin{proof}
\journal{We need to prove that neural network $\appr{h}$ satisfies Eqs.~\ref{eqn:safecon}--\ref{eqn:apprxDerCon}. We refer to~\cite{Zhao2020} for proofs of the unchanged conditions in Eqs.~\ref{eqn:safecon} and~\ref{eqn:usafecon}. For the condition in Eq.~\ref{eqn:apprxDerCon}, we modified the optimization problem in Eq.~38 of~\cite{Zhao2020} to include the range of values that $f(x,g(x))$ can assume given a state in accordance with approximate dynamics $\appr{f}(x,g(x))$ and its error $\epsilon_0$.  This ensures that the original condition holds for all dynamics within $\epsilon_0$ of $\appr{f}$, and hence it holds for the actual dynamics $f$.}

\thesis{
    According to Theorem~\ref{thm:approxCond}, we need to verify that the neural network $\appr{h}$ satisfies the conditions in Eqs.~(\ref{eqn:safecon},~\ref{eqn:apprxDerCon}). For the verification proofs of the conditions in Eqs.~(\ref{eqn:safecon},~\ref{eqn:usafecon}) we refer the reader to~\cite{Zhao2020}.
    For the verification of the condition in Eq.~\ref{eqn:apprxDerCon}, we solve the optimization problem given in Eq.~\ref{eqn:approxpwlenc}. The optimum solution of this problem $p^*$ is actually the maximum value of the product $y\frac{\partial \appr{h}}{\partial x}, \, y \in [\appr{f}(x)-\epsilon_0,\appr{f}(x)+\epsilon_0],\, x \in D_3$. If $p^*$ is non-positive over the given domains, then we can say that $y\frac{\partial \appr{h}(x)}{\partial x} \leq 0 , \, \forall x: \appr{h}(x)=0, \, \forall y \in [\appr{f}(x)-\epsilon_0, \appr{f}(x)+\epsilon_0 ]$. Thus, we have verified that the neural network $\appr{h}$ satisfies all the necessary conditions for a BaC of actual dynamics $f$.}
\end{proof}

\thesis{
In Gurobi, we first have to apply a piece-wise linear approximation to $\appr{f}$, changing the optimization problem in Eq.~\ref{eqn:approxpwlenc} to the following:
\begin{equation}
\begin{aligned}
    \label{eqn:approxpwlenc_2}
     p_r^*=\max \quad &y \, v \\
     \text{s.t.} \quad & x_0 \in D_3 \\
       \quad & v = \frac{\partial h}{ \partial x}\\
       \quad & y \geq \appr{f}_{\rm PL}(x)-\epsilon_0 \\
       \quad & y \leq \appr{f}_{\rm PL}(x)+\epsilon_0 \\
       \quad & x_1 = ReLU(W_1 \, x_0 + b_1)\\
     \quad & x_2 = ReLU(W_2 \, x_1 + b_2) \\
     \quad & x_3 = W_3 \, x_2 + b_3\\
     \quad & W_3 \, x_2 + b_3 = 0
\end{aligned}
\end{equation}
where $\appr{f}_{\rm PL}(x)$ is a PWL approximation of $\appr{f}$.
\begin{theorem}
    For the problem defined in Eq.~\ref{eqn:approxpwlenc_2}, if the solution is $p' \leq 0$, while the maximum relative error is $\xi < 1$, then the global optimal solution $p^*$ of Eq.~\ref{eqn:approxpwlenc} is non-positive.
\end{theorem}
\begin{proof}
See proof of Theorem%
\end{proof}
}

\journal{
In Gurobi, we need to replace the new inequalities in Eq.~\ref{eqn:approxpwlenc} with two PWL-equivalent inequalities $y \geq \appr{f}_{\rm PL}(x)-\epsilon_0 $ and $y \leq \appr{f}_{\rm PL}(x)+\epsilon_0$, where $\appr{f}_{\rm PL}(x)$ is a PWL approximation of $\appr{f}(x,g(x))$.
}

\subsection{Admissibility Condition in FSC using Approximate Dynamics}
\label{sec:approx-admissibility}

The admissibility condition $\alpha$ of the FSC derived using the actual dynamics checks whether the state is in a shrunken admissible region $\admisr(u)$, defined by Eq.~\ref{eq:res_admis_reg}.  
When using approximate dynamics, the restricted admissible region $\appr{\admisr}(u,\epsilon_0)$ is defined by modified versions of those equations that take the error bound $\epsilon_0$ for the dynamics into account.  This ensures that the admissibility condition holds for all actual dynamics consistent with the approximate dynamics.  Specifically,
\begin{equation}
\label{eqn:approxadmisr}
    \appr{\admisr}(u,\epsilon)=\{ x : x_{lb}+\appr{\mu}_{\rm dec}(u,\epsilon_0) < x < x_{ub}-\appr{\mu}_{\rm inc}(u,\epsilon_0) \}
\end{equation}
where vectors $\appr{\mu}_{\rm dec}(u,\epsilon_0)$ and $\appr{\mu}_{\rm inc}(u,\epsilon_0)$ are defined as follows:
\begin{equation}
\begin{aligned}
\label{eqn:muOptapprox}
\appr{\mu}_{\rm dec}(u,\epsilon) &= \eta (|\min(0, \appr{\dot{x}}_{min}(u))|+ \epsilon_0)\\
\appr{\mu}_{\rm inc}(u,\epsilon) &= \eta (|\max(0, \appr{\dot{x}}_{max}(u))|+\epsilon_0)
\end{aligned}
\end{equation}
where $\appr{\dot{x}}_{min}$ and $\appr{\dot{x}}_{max}$ are vectors of solutions to the optimization problems: \begin{equation}
\begin{aligned}
\label{eqn:mu1approx}
\appr{\dot{x}}_{min}(u) &= \inf \{ \appr{f}(x,u): x \in \admisnew \}\\
\appr{\dot{x}}_{max}(u) &= \sup \{ \appr{f}(x,u): x \in \admisnew \}
\end{aligned}
\end{equation}

The admissibility condition $\appr{\beta}$ in the FSC when using approximate dynamics with uncertainty $\epsilon_0$ is:
\begin{equation}
    \label{eqn:approximatealpha}
    \appr{\beta} \equiv x \notin \appr{\admisr}(u,\epsilon_0)
\end{equation}

\begin{lemma}
\label{lemma:maxmu}
$\appr{\mu}_{\rm dec}(u,\epsilon_0)$ and $\appr{\mu}_{\rm inc}(u,\epsilon_0)$ given by Eq.~\ref{eqn:muOptapprox} are the largest amounts by which a state $x$ can decrease and increase, respectively, during time $\eta$, provided that $x$ remains in admissible region $\admisnew$.
\end{lemma}
\begin{proof}
    The time derivative of state $x$ is bounded by $\dot{x}_{min}(u)$ and $\dot{x}_{max}(u)$. During time $\eta$, state $x$ will decrease and increase by at most $\eta|\dot{x}_{min}(u)|$ and $\eta|\dot{x}_{max}(u)|$, i.e., $\mu_{\rm dec}(u)$ and $\mu_{\rm inc}(u)$, respectively. We can obtain upper bounds on these amounts by using approximate dynamics $\appr{f}$ and error bound $\epsilon_0$. If $x$ remains within admissible region $\admisnew$, the absolute difference between $f$ and $\appr{f}$ is at most $\epsilon_0$. Hence, $\mu_{\rm dec}(u)$ and $\mu_{\rm inc}(u)$ are bounded from above by $\eta (|\min(0,\appr{\dot{x}}_{min}(u))|+ \epsilon_0)$ and $\eta (|\max(0, \appr{\dot{x}}_{max}(u))|+ \epsilon_0)$, respectively. Hence, during period $\eta$, $x$ can decrease by the maximum amount $\appr{\mu}_{\rm dec}(u,\epsilon_0)$ and increase by the maximum amount $\appr{\mu}_{\rm inc}(u,\epsilon_0)$ as long as it remains in $\admisnew$.
\end{proof}

\noindent The following analogue of Lemma \ref{lemma:admis} holds.  \journal{The proofs are similar.}

\smallskip
\begin{lemma}
\label{lemma:approxAlpha}
Let $\appr{f}$ be an approximate dynamics with error bound $\epsilon_0$.  For all $x \in \appr{\admisr}(u,\epsilon_0)$ and all control actions $u$, $\textit{Reach}_{\le \eta}(x,u) \subseteq \admis$.
\end{lemma}
\thesis{
\begin{proof}
According to Lemma~\ref{lemma:maxmu}, $\appr{\mu}_{\rm dec}(u,\epsilon_0)$ and $\appr{\mu}_{\rm inc}(u,\epsilon_0)$ are the largest amounts by which the state $x$ can decrease and increase, respectively, during time $\eta$, as long as $x$ remains within $\admisnew$. Since $\appr{\admisnew}_r(u,\epsilon_0)$ is obtained by shrinking $\admisnew$ by $\appr{\mu}_{\rm dec}(u,\epsilon_0)$ and $\appr{\mu}_{\rm inc}(u,\epsilon_0)$, i.e., by moving the lower and upper bounds of each variable inwards by those amounts, the state cannot move outside of $\admisnew_r$ during $\eta$. Formally, since the maximum amount by which $x$ can increase in time $\eta$ is $\appr{\mu}_{\rm inc}(u,\epsilon_0)$, we have:
\begin{align*}
    x(t+\eta) &\leq x(t)+\appr{\mu}_{\rm inc}(u,\epsilon_0) \\
    &\leq x(t)+\eta (|\appr{\dot{x}}_{max}(u)|+\epsilon_0)
\end{align*}
From Eq.~\ref{eqn:approxadmisr}, we have the following:
\begin{align*}
        x(t+\eta) &< x_{ub}-\appr{\mu}_{\rm inc}(u,\epsilon)+\eta (|\appr{\dot{x}}_{max}(u)|+\epsilon_0)\\
        &< x_{ub}-\eta (|\appr{\dot{x}}_{max}(u)|+\epsilon_0)+\eta (|\appr{\dot{x}}_{max}(u)|+\epsilon_0)\\
        &< x_{ub}
\end{align*}
Similarly, since the maximum amount by which $x$ can decrease in time $\eta$ is $\appr{\mu}_{\rm dec}(u,\epsilon_0)$, we have:
\begin{align*}
    x(t+\eta) &\geq x(t)-\appr{\mu}_{\rm dec}(u,\epsilon_0)\\
    x(t+\eta) &\geq x(t)-\eta (|\appr{\dot{x}}_{min}(u)|+\epsilon_0)
\end{align*}
From Eq.~\ref{eqn:approxadmisr}, we have the following:
\begin{align*}
        x(t+\eta) &> x_{lb}+\appr{\mu}_{\rm dec}(u,\epsilon_0)-\eta (|\appr{\dot{x}}_{min}(u)|+\epsilon_0) \\
        &> x_{lb}+\eta (|\appr{\dot{x}}_{min}(u)|+\epsilon_0)-\eta (|\appr{\dot{x}}_{min}(u)|+\epsilon_0) \\
        &> x_{lb}
\end{align*}
Hence, $x(t) \in \appr{\admisnew}_r(u,\epsilon_0) \implies x(t+\eta) \in \admisnew$
\end{proof}
}
\journal{
\begin{proof}
According to Lemma~\ref{lemma:maxmu}, $\appr{\mu}_{\rm dec}(u,\epsilon_0)$ and $\appr{\mu}_{\rm inc}(u,\epsilon_0)$ are the largest amounts by which the state $x$ can decrease and increase, respectively, during time $\eta$, as long as $x$ remains within $\admisnew$. Since $\appr{\admisnew}_r(u,\epsilon_0)$ is obtained by shrinking $\admisnew$ by $\appr{\mu}_{\rm dec}(u,\epsilon_0)$ and $\appr{\mu}_{\rm inc}(u,\epsilon_0)$, i.e., by moving the lower and upper bounds of each variable inwards by those amounts, the state cannot move outside of $\admisnew_r$ during $\eta$. 
\end{proof}
}

\subsection{Safety and Recoverability Conditions in FSC using Approximate Dynamics}

We first discuss the impact of approximate dynamics on the safety and recoverability conditions in the FSC when a degree-2 Taylor approximation is used for the BaC, and then present the generalization to Taylor approximations of any degree. 

Consider the degree-2 Taylor approximation $\hat{\appr{h}}$ of $\appr{h}$, given a control action $u$:
\begin{equation}
    \hat{\appr{h}}(x,u,\eta)=\appr{h}(x,u)+\frac{d\appr{h}(x,u)}{dt} \eta + \frac{1}{2}\frac{d}{dt}(\frac{d\appr{h}(x,u)}{dt})\eta^2
\end{equation}
Since we are usually interested in predicting the value one time step in the future, we use $\hat{h}(x,u)$ as shorthand for $\hat{h}(x,u,\eta)$.  Applying the chain rule and the multiplication rule for derivatives, we can expand this expression as:
\begin{equation}
\begin{aligned}
    \hat{\appr{h}}(x,u)=\,&\appr{h}(x,u)+\frac{\partial \appr{h}(x,u)}{\partial x}\appr{f}(x,u)\eta \ieeeonly{\\&}+ 
    \frac{1}{2}(\appr{f}^{T}(x,u)\frac{\partial^2 \appr{h}(x,u)}{ \partial x^2}\appr{f}(x,u)\\&+\frac{\partial \appr{h}(x,u)}{\partial x}\frac{\partial \appr{f}(x,u)}{\partial x}\appr{f}(x,u))\eta^2
\end{aligned}
\end{equation}

The terms depend on the approximate dynamics $\appr{f}$, so we need to analyze the impact of using approximate dynamics instead of actual dynamics on their values.  Specifically, the terms depend on the derivatives of $\appr{f}$, so the error cannot easily be expressed in terms of error $\epsilon_0$ alone.  We use the $1^{st}$-order derivative error $\epsilon_1$ to bound the error in the term involving $\frac{\partial \appr{f}}{\partial x}$.  Recall that the FSC needs to check whether the BaC is possibly positive during the next time step; so we need to derive an upper bound on it.  To derive an upper bound on the Taylor approximation $\hat{\appr{h}}$ of $\appr{h}$, we introduce vector parameters $\delta_0 \in [-\epsilon_0,\epsilon_0],\delta_1 \in [-\epsilon_1,\epsilon_1]$ to range over the possible values of the actual dynamics $f$ and its spatial derivative $\frac{\partial \appr{f}}{\partial x}$ that are consistent with $\appr{f}$. Thus, an upper bound $\hat{\appr{h}}_{U}$ 
on the degree-2 Taylor approximation $\hat{\appr{h}}$ of $\appr{h}$, satisfying $\hat{\appr{h}}(x,u) \le 
 \hat{\appr{h}}_U(x,u,\epsilon_0,\epsilon_1)$, is: 
\begin{equation}
\label{eqn:hhatoptim}
\begin{aligned}
 \hat{\appr{h}}_U(x,u,\epsilon_0,\epsilon_1) =&
    \sup \{
    \appr{h}(x,u)+\frac{\partial \appr{h}(x,u)}{\partial x}(\appr{f}(x,u)+\delta_0)\eta\\
    &\artfmsdonly{\quad}+\frac{1}{2}((\appr{f}^{T}(x,u)+\delta_0^T)\frac{\partial^2 \appr{h}(x,u)}{ \partial x^2}(\appr{f}(x,u)+\delta_0)\\
    &\artfmsdonly{\quad}+\frac{\partial \appr{h}(x,u)}{\partial x}(\frac{\partial \appr{f}(x,u)}{\partial x}+\delta_1)(\appr{f}(x,u)+\delta_0))\eta^2 :\\
    &\artfmsdonly{\quad} \delta_0 \in [-\epsilon_0, \epsilon_0], \delta_1 \in  [-\epsilon_1, \epsilon_1] \} 
    \end{aligned}
\end{equation}
For specific values of $x$ and $u$, the resulting optimization problem is quadratic in $\delta_0$ and multi-linear in $\delta_0$ and $\delta_1$, and hence can be solved exactly using standard optimization algorithms.

In general, we use the following formula to calculate the $n^{th}$ derivative of $\appr{h}$ w.r.t.\ time using the $(n-1)^{st}$ derivative by applying the chain rule for derivatives.
\begin{equation}
\label{eq:recur_der_h}
\begin{aligned}
        \appr{h}^n(x,u)& = \frac{d}{dt}(\appr{h}^{n-1}(x,u))\\
        &=\frac{dx}{dt}\frac{\partial}{\partial x}(\appr{h}^{n-1}(x,u))\\
        &=\appr{f}(x,u)\frac{\partial }{\partial x}( \appr{h}^{n-1}(x,u))\\
        &=\appr{f}(x,u) \nabla_x(\appr{h}^{n-1}(x,u))
        \end{aligned}
\end{equation}
This equation represents the $n^{th}$ time derivative of $\appr{h}$ in terms of $\appr{f}$ and the $(n-1)^{st}$ time derivative of $\appr{h}$.  Iterating the chain rule, we obtain an expression for $\appr{h}^n(x,u)$ in terms of $\appr{f}$, time derivatives $\appr{f}^i$ for $i < n$, and spatial derivatives of $\bar{h}$.  We need an upper bound $\appr{h}^n_U(x,u,\epsilon_0,\epsilon_1,\ldots,\epsilon_n)$ on its value for all dynamics consistent with $\appr{f}$.  We obtain this by replacing each occurrence of $\appr{f}(x)$ with $(\appr{f}(x)+\delta_0)$ and each occurrence of $\appr{f}^i(x)$ with $(\appr{f}^i(x)+\delta_i)$, where $\delta_i \in [-\epsilon_i,\epsilon_i]$.  We substitute these expressions in the $n^{\rm th}$-degree Taylor approximation of $\appr{h}$ at time $\eta$ (cf.\ Eq.~\ref{eqn:taylor}) to obtain an upper bound on that Taylor approximation:
\begin{equation}
     \label{eqn:generalEqn}
     \begin{aligned}
      &\hat{\appr{h}}_U(x,u,\epsilon_0,\epsilon_1, \ldots ,\epsilon_n) =\\
      &\quad \sup \{ \appr{h}(x,u)
      +\sum_{i=1}^{n}\frac{\appr{h}^{i}_U(x,u,\delta_0, \delta_1,\ldots ,\delta_i)}{i!} \eta^i:\\
      &\quad\quad\quad \delta_0 \in [-\epsilon_0,\epsilon_0],
      \delta_1 \in [-\epsilon_1,\epsilon_1], \ldots, \delta_n \in [-\epsilon_n,\epsilon_n] \}
      \end{aligned}
 \end{equation}

To rigorously determine whether the BaC could be positive within time $\eta$, we also need an upper bound on the remainder error in the Taylor approximation.  By modifying
Eq.~\ref{eqn:remainderError} in 
a similar way, we obtain an upper bound $\appr{\lambda}(u, \epsilon_0, \epsilon_1, \ldots ,\epsilon_{n+1})$ on the remainder error $\lambda(u)$.  We get an expression for $\appr{h}^{n+1}$ in terms of $\appr{h}^n$ and $\appr{f}$, and recursively apply 
Eq.~\ref{eq:recur_der_h} to obtain a final expression for $\appr{h}^{n+1}$ in terms of $\appr{f}$ and its first $n+1$ spatial derivatives.  This expression uses approximate dynamics, and we obtain an upper bound on its value with respect to the actual dynamics by considering all dynamics that are consistent with the approximate dynamics and the error bounds.
Similar as above, we replace the $\appr{f}$ and $\appr{f}^i$ terms with $\appr{f}(x)+\delta_0$ and $\appr{f}^i(x)+\delta_i$ where $\delta_i \in [-\epsilon_i,\epsilon_i]$, and then maximize over the $\delta_i$ to obtain the desired upper bound. The result is:
\begin{equation}
\label{eq:approxlambda}
\begin{aligned}
\appr{\lambda}(u,\epsilon_0,\epsilon_1, \ldots ,\epsilon_{n+1})=\sup \{ & \frac{|\appr{h}^{n+1}(x,u,\delta_0,\delta_1, \ldots ,\delta_{n+1})|}{(n)!} \eta^{n}:\\
&x \in \admisnew,\delta_0 \in [-\epsilon_0,\epsilon_0],\delta_1 \in [-\epsilon_1,\epsilon_1],\\& \ldots ,\delta_{n+1} \in [-\epsilon_{n+1},\epsilon_{n+1}] \}
\end{aligned}
\end{equation}
Finally, the safety and recoverability condition $\appr{\alpha}$ in the FSC using approximate dynamics is:
 \begin{equation}
     \label{eqn:approxsigma}
     \appr{\alpha} \equiv \hat{\appr{h}}_U(x,u,\epsilon_0,\epsilon_1, \ldots ,\epsilon_{n})+\appr{\lambda}(u,\epsilon_0,\epsilon_1, \ldots ,\epsilon_{n+1}) > 0
 \end{equation}

\subsection{Extended FSC for Approximate Dynamics}

The FSC using approximate dynamics is defined analogously to the FSC using the actual dynamics in Eq.~\ref{eq:FSC}.

\begin{theorem}
\label{thm:approx-fsc}
Given a positive integer $n$, approximate dynamics $\appr{f}$ with error bounds $\epsilon_0$,$\epsilon_1, \ldots ,\epsilon_{n+1}$ with respect to actual dynamics $f$, and a BaC $\appr{h}$ as defined in Eqs.~(\ref{eqn:safecon}--\ref{eqn:apprxDerCon}),  the condition 
\begin{equation}
\label{eq:apprxFSC}
FSC(x,u,\epsilon_0,\ldots,\epsilon_{n+1})=\appr{\alpha} \lor \appr{\beta}
\end{equation}
is a forward switching condition.
\end{theorem} 
\begin{proof}
Suppose $x$ is a recoverable state, $u$ is a control action, and %
$\textit{Reach}_{\leq \eta}(x,u)$ $\cap$ $\mathcal{U}$ $\neq$ $\emptyset$ $\lor$ $\textit{Reach}_{=\eta}(x,u) \not\subset \recov$,
i.e., there is an unsafe state in reach set $\textit{Reach}_{\leq \eta}(x,u)$ or an unrecoverable state in reach set $\textit{Reach}_{=\eta}(x,u)$.  Let $x'$ denote that unsafe or unrecoverable state. Therefore, $\appr{h}(x',u) > 0$.  We need to show that $\appr{\alpha} \lor \appr{\beta}$ holds.  We do a case analysis based on whether $x$ is in $\appr{\admis}_r(u,\epsilon_0)$.

Case 1: $x \notin \appr{\admis}_r(u,\epsilon_0)$. In this case, $\appr{\beta}$ holds, so the desired disjunction holds. 

Case 2: $x \in \appr{\admis}_r(u,\epsilon_0)$.
In this case, we use an upper bound on the value of the BaC $\appr{h}$ to show that states reachable in the next control period are safe and recoverable. Theorem~\ref{thm:approxCond} proves that $\appr{h}$ satisfying the conditions in Eqs.~\ref{eqn:safecon}--\ref{eqn:apprxDerCon} is a BaC for the actual dynamics $f$.  Using Lemma~\ref{lemma:approxAlpha}, we have $\textit{Reach}_{\le \eta}(x,u)  \subseteq  \admis$. This implies that $\appr{\lambda}(u,\epsilon_0,\ldots,\epsilon_{n+1})$, whose definition given in Eq.~\ref{eq:approxlambda}, maximizes its value over $x\in\admis$, is an upper bound on the error in the Taylor approximation $\hat{\appr{h}}_U(x,u,\delta,\epsilon_0,\ldots,\epsilon_n)$ for $\delta \le \eta$. Also according to Eq.~\ref{eqn:generalEqn}, $\hat{\appr{h}}_U(x,u,\epsilon_0,\ldots,\epsilon_n)$ is a maxima over the dynamics consistent with the approximate dynamics, which includes the actual dynamics.  This implies that $\hat{\appr{h}}_U(x,u,\epsilon_0,\ldots,\epsilon_n)+\appr{\lambda}(u,\epsilon_0,\ldots,\epsilon_{n+1})$ is an upper bound on value of the BaC for all states in $\textit{Reach}_{\le\eta}(x,u)$. As shown above, there is a state $x'$ in $\textit{Reach}_{\le\eta}(x,u)$ with $\appr{h}(x',u) > 0$.  Since $\hat{\appr{h}}_U(x,u,\epsilon_0,\ldots,\epsilon_n)+\appr{\lambda}(u,\epsilon_0,\ldots,\epsilon_{n+1})$ is upper bound on $\appr{h}(x',u)$, it must also be greater than 0. Thus, $\appr{\alpha}$ holds.
\end{proof}

\subsection{Bounding derivative errors}
\label{ch:Conformance}
\newcommand{\atc}[1]{\underline{#1}}

This section describes how to measure the conformance degree (see Definition~\ref{def:cd}) of the approximate dynamic model given by Eq.~\ref{eqn:approx} with respect to the actual dynamics.  From the conformance degree, we obtain the error bounds $\epsilon_i$.  

\paragraph*{Approximate trace conformance}
\label{sec:atc}

Conformance relations are used to assess the similarity of the behavior of two systems~\cite{roehm2019}. 
Roehm et al. \cite{roehm2019} classify conformance relations into several types.
\emph{Simulation} is a conformance relation that relates states. 
\emph{Trace conformance} relates only output traces, not states, where an \emph{output trace} is a sequence of outputs generated by the system during an execution, and a \emph{trace} is a sequence of states that the system passes through during an execution.
\emph{Reachset conformance} abstracts a set of traces into a set of reachable states and relates these sets of states.
\emph{Approximate simulation} and \emph{approximate trace conformance} are approximate versions of simulation and trace conformance, where the states and output traces, respectively, need only be similar (approximately equal). There are also special conformance relations for hybrid systems as well as stochastic systems.   

In our experiments with RTDS, we have access to output traces but not full system states.  Therefore, we use approximate trace conformance (ATC) to assess the similarity between an analytical dynamic model, called the  \emph{approximate system}, and the RTDS model, which we refer to as the \emph{real system}.

ATC compares the traces of the two systems starting from the same state. Here we assume that the two systems are continuous, not hybrid, systems.  The following definition of ATC is adapted from~\cite{abbas2014}, which considers hybrid systems, by omitting aspects related to modes and mode changes.  
We represent traces of the systems as \textit{timed sequences}. 
\begin{definition}
\label{def:ts}
    Let $N$ be a positive integer, $T$ be a positive real, and $Y$ be a set of possible outputs of a system.  A \textbf{timed sequence} (TS) is a function $\theta: [{0,1,...,N}]  \rightarrow Y \times [0,T]$ such that for all $i \in [{0,1,...,N}], \theta(i)=(y(i), t(i))$ with $t(0)=0, t(i) \leq t(i+1)$, and $y(i) \in Y$. 
\end{definition}
In Definition~\ref{def:ts}, $y(i)$ is the system's output at time $t(i)$.  The times $t(i)$ are measured relative to the time of the initial state of the TS. 

We define ATC in terms of closeness of TSs, using parameters $\tau$ and $\gamma$ that bound the differences in time and space, respectively, between corresponding outputs of the two systems.  
We now define ($\tau,\gamma$)-closeness of timed sequences.
\begin{definition}
Given a time interval $[0,T]$ and parameters $\tau,\gamma$, two TSs $\theta=(y,t)$ and $\theta'=(y',t')$ are $(\tau,\gamma)$-close, denoted $\theta \simeq_{\tau,\gamma} \theta'$, if:
\begin{itemize}
     \item for all $i \in dom(\theta)$, $t(i)\leq T$, there exists $j \in dom(\theta')$ such that $t(i)=t'(j), |t(i)-t'(j)|<\tau$ and $|y(i)-y'(j)|<\gamma$.   
    \item for all $k \in dom(\theta'),$ $t'(k) \leq T$, there exists $c \in dom(\theta)$ such that $t'(k)=t(c), |t'(k)-t(c)|<\tau$ and $|y'(k)-y(c)|<\gamma$.
\end{itemize}
\end{definition}

\begin{definition}
Approximate dynamical system $\mathcal{D}_{appr}$ is \textbf{approximate trace conformant} to real dynamical system $\mathcal{D}_{real}$ with $(\tau,\gamma)$-closeness, denoted as $\mathcal{D}_{real}\preceq_{(\tau,\gamma)}\mathcal{D}_{appr}$, if for all TS $\theta \in \mathcal{D}_{real}$, there exists a TS $\theta' \in \mathcal{D}_{appr}$ such that $\theta$ and $\theta'$ are $(\tau,\gamma)$-close.
\end{definition}

\begin{definition}
\label{def:cd}
Given a real number $\tau$, the \textbf{conformance degree} between $\mathcal{D}_{appr}$ and $\mathcal{D}_{real}$ is the smallest $\gamma$ such that the two systems are approximate trace conformant.
\end{definition}

\textbf{Computing the CD.}
Abbas et al. \cite{abbas2014} present an algorithm based on rapidly-exploring random trees (RRTs) to compute a lower bound on the CD between two models.  This bound can be regarded as an (under-)approximation of the CD. 
This algorithm requires the ability to run each model from an initial state selected by the algorithm.  This is not feasible for the real dynamic system in some cases, including the experiments described in \thesis{Chapter}\journal{Section}~\ref{sec:EE_BCM}.  We present a variant of the algorithm that does not require this capability.  
\journal{
Our modified algorithm samples a batch of TSs from the approximate and real systems.  Each batch is a set of pairs $(\theta, \theta')$ of TSs, where $\theta$ and $\theta'$ are TSs of $\mathcal{D}_{real}$ and $\mathcal{D}_{appr}$, respectively, executed starting from the same initial state and initial control input.  For a given value of $\tau$, the algorithm compares the two traces in each pair and calculates the smallest $\gamma$ such that the two traces are $(\tau,\gamma)$-close, and then takes the maximum of the $\gamma$'s.  This process is repeated for additional batches of TSs until the fractional increase in $\gamma$ is less than a given threshold $r$.  The algorithm returns the final value of $\gamma$ as an estimate of the CD for the two systems.  Note that
the CD is calculated separately for each state variable.
}
\thesis{
\newcommand{\Sample}{\mathit{Sample}}
\begin{algorithm}
    \caption{Algorithm to calculate CD}
    \label{alg:CD}
    \begin{algorithmic}[1] 
        \Procedure{ConformanceDegree}{$\mathcal{D}_{real},\mathcal{D}_{appr},\tau \in \mathbb{R},N \in \mathbb{N}, r \in \mathbb{R}_+^d$} 
            \State $ \textit{CD} \gets 0$
            \While{$\mathit{True}$} 
                \State $ \mathit{Traces} \gets \Sample(N,\mathcal{D}_{real},\mathcal{D}_{appr})$
                \State $\Gamma \gets \emptyset$
                \For{$\langle t_{real}, t_{appr} \rangle\in \mathit{Traces}$}
                \State $\Gamma \gets \Gamma \cup \inf\{ \gamma: t_{real}\simeq_{\tau,\gamma} t_{appr} \}$
                \EndFor
                \State $\gamma=max(\Gamma)$
                \State $ \Delta \gets\ {\rm if}\  \textit{CD}=0\ {\rm then}\ 0\ {\rm else}\ \frac{|\gamma - \textit{CD}|}{\textit{CD}}$
                \If {$~\Delta \leq r$} 
                    \State return $max(\textit{CD},\gamma)$
                \Else
                   \State $ \textit{CD} \gets \gamma$ 
                \EndIf
            \EndWhile
        \EndProcedure
    \end{algorithmic}
\end{algorithm}
Algorithm~\ref{alg:CD} gives a lower bound on the CD between two systems.  The inputs are 
the real dynamical system $\mathcal{D}_{real}$, 
the approximate dynamical system $\mathcal{D}_{appr}$, 
the time-difference parameter $\tau$ of the $(\tau,\gamma)$-closeness relation,
the number $N$ of TSs to be collected for each sample of the systems' behaviors,
and a $d$-vector $r$ of positive reals representing the threshold in the algorithm's stopping criteria, where $d$ is the number of outputs, i.e., the dimension of an output vector for this system.
$\Sample(N, \mathcal{D}_{real}, \mathcal{D}_{appr})$ returns a set containing $N$ pairs of TSs.  The first component of each pair is a TS of $\mathcal{D}_{real}$ executed starting from a selected initial state and initial control input, and the second component is a TS of $\mathcal{D}_{appr}$ executed starting from the same state and control input and executing for the same duration.  
Outputs of both systems are sampled at the same rate during execution. 
After TSs are sampled from the real and approximate systems, the algorithm  calculates $(\tau,\gamma)$-closeness for each  pair of traces and then takes the maximum. This process is repeated until the fractional increase in the estimate of the CD is less than the threshold $r$.  The CD is calculated separately for each state variable, so $r$, $\gamma$, $\Delta$, and \textit{CD} are vectors.
}

\paragraph*{Computing Error Bounds for Derivatives}

The definitions of approximate trace conformance and conformance degree are implicitly parameterized by the notion of the system's output used when generating TSs.  We obtain a lower bound on each derivative error $\epsilon_i$, defined in Eq.~\ref{eqn:dererror}, by applying the above algorithm to TSs generated using $\frac{\partial^i f}{\partial x^i}$ as the system's output.  In cases where the dynamics $f$ is not available in an analytical form that can be differentiated, the partial derivatives can be estimated from short sequences of consecutive states in the TS using the finite difference formulas in~\cite[Section 5.1]{numAnalysis}.

If the traces used to compute the conformance degree achieve good coverage of the systems' behaviors, and a suitably small value is used for the algorithm termination parameter $r$, then these lower bounds are expected to be reasonably tight and serve as good estimates of the derivative errors.   
We rely on these estimates because it is difficult to obtain an upper bound on $\epsilon_i$ when the model of the real system is a black box.  \thesis{For example, the method in \cite{abbas2014} for obtaining an upper bound on the CD applies when the real and approximate system models are both white-box switched linear systems.}

\section{Extending \name to Hybrid Systems}
\label{sec:hybrid}

\label{ch:hybrid}
\newcommand{\radmis}{\mathcal{A}_r}
A hybrid system is a system with multiple discrete modes, each with a different dynamics (for the continuous variables).  
Microgrids generally have multiple modes of operation and hence can be modeled as hybrid systems.  Often different modes correspond to different controllers.  For example, in a transition from grid-connected to islanded mode, a frequency-regulating controller is adopted for the DG, and the system dynamics changes to reflect the behavior of this controller.  When a Simplex architecture is used, there are different modes corresponding to the baseline controller and the advanced controller.

Safety verification for hybrid systems is generally even more challenging than for continuous systems, because the analysis needs to take all possible mode transitions into account.  This can significantly increase the number of possible behaviors that need to be considered, since mode transitions can occur non-deterministically when a guard condition is satisfied.  Thus, runtime assurance techniques such as \name are even more important for hybrid systems. We give background on hybrid systems and then extend our method for deriving switching conditions to support hybrid systems.

\subsection{Background on Hybrid Systems}

We adopt the definition of hybrid systems in the SyntheBC paper~\cite{Zhao2021}, which is based on Prajna and Jadbabaie's definition \cite{prajna2004}.

\begin{definition}
A \textbf{hybrid system} $\Pi$ is a tuple $\langle M,X,f,D,I,U,T,R \rangle$, where
\begin{itemize}
\setlength{\itemsep}{0pt}
    \item[-]$M:\{m_1,\ldots,m_k\}$ is a finite set of \emph{modes}.
    \item[-]$X$ is a set of (continuous) \emph{system variables}.
    \item[-]$f$ is a set of \emph{update functions} for the system variables $X$, with an update function $f(m, \ldots)$ for each mode $m$.  The dynamics in mode $m$ is $\dot{x}=f(m,x,g_m(x))$, where $g_m$ is the controller's control law in mode $m$.
    \item[-]$D=M \times X$ is the \emph{state space}.  For a state $(m,x)$, we refer to $x$ as the \emph{continuous state}.
    \item [-]  $I \subseteq D$ is a set of \emph{initial states}.
    \item [-]  $U \subseteq D$ is a set of \emph{unsafe states}.
    \item [-] $T : \{t(m,m') | m \neq m'\}$ is a set of sets of continuous states, where $t(m,m')$ is the set of \emph{transition states} from mode $m$ to $m'$, i.e., continuous states in mode $m$ in which a transition to mode $m'$ can occur.
    \item [-] $R : \{r_{m,m'}(x) |m \neq m'\}$ is a set of \emph{reset functions}, with a reset function for each combination of modes for which $t(m,m')$ is non-empty.  $r_{m,m'}(x)$ is the continuous state resulting from a transition to mode $m'$ from continuous state $x$ in mode $m$. 
\end{itemize}
\label{def:hybrid_sys}
\end{definition}

The projections of $I$ and $U$ on a mode $m$ are  $I(m)=\{x \,|\, (m,x) \in I\}$ and $U(m)=\{x \,|\, (m,x) \in U\}$, respectively.  Let $x_{lb}(m), x_{ub}(m) $ be operational bounds on the ranges of state variables in mode $m$, reflecting physical limits and simple safety requirements. The set $\admis(m)$ of admissible states of mode $m$ is  $\admis(m)=\{x:x_{lb}(m)\leq x \leq x_{ub}(m)\}$. The overall set of admissible states is $\admis=\bigcup_{m \in M}\admis(m)$. 

\begin{definition} 
\label{def:H_BaC} 
For a hybrid system $\Pi$ of the form in Definition \ref{def:hybrid_sys}, a set of functions $h_m$ is a barrier certificate for $\Pi$ if, for each mode $m$ in $M$, and each pair of modes $(m,m')$, the following conditions hold~\cite{Zhao2021}:
\begin{align}
\label{eq:H_init}
    & h_m(x) < 0 , \forall x \in I(m) \\
\label{eq:H_unsafe}
    & h_m(x) > 0, \forall x \in U(m)\\
\label{eq:H_der}
    &\frac{\partial h_m(x) }{\partial x}f(m,x,g_m(x))) \le 0, \forall x \mbox{\rm\ s.t. } h_m(x)=0\\
\label{eq:H_other}
    &h_{m'}(x') \leq 0, \forall x \in t(m,m') \mbox{\rm\ s.t. } h_m(x)\leq 0 \land x' = r_{m,m'}(x) 
\end{align}
\end{definition}

We use the SyntheBC method~\cite{Zhao2021} to derive and verify BaCs for each mode.

\subsection{Switching Conditions for Hybrid Systems}

Let $(\bar{m}, \bar{x})$ denote a state reachable from the current state $(m,x)$ given a control action$u$ in one control period $\eta$, i.e., $(\bar{m},\bar{x}) \in Reach_{\leq \eta}((m, x), u)$.
A forward switch is needed if $(\bar m, \bar x)$ may be unsafe or unrecoverable. We can determine this by checking whether the BaC can be positive in this state, similar as for continuous systems.  However, we need to take mode transitions into account to determine the possible modes $\bar{m}$, and use the dynamics of the relevant modes to determine which continuous states $\bar{x}$ are reachable in time $\eta$.  We modify the switching condition derivation method in \journal{Section}\thesis{Chapter}~\ref{ch:DM} to do this.

To simplify the problem, we assume that at most one mode transition can occur during one control period.  This assumption is reasonable for many systems.  In particular, it is reasonable for microgrids, because $\eta$ is small, and because (as discussed at the beginning of this section) most mode changes in microgrids are due to switching from one controller to another, and those changes  occur on control period boundaries.

With this assumption, there two cases to consider, corresponding to the two ways that an unsafe or unrecoverable state can be reached from the current state $(m,x)$: (1)~continuous evolution of the state in the current mode $m$, if no mode transition occurs during the control period, or (2)~continuous evolution in the current mode $m$, followed by a mode transition to mode $\bar m$, and then continuous evolution in mode $\bar m$.  Note that BaC condition~(\ref{eq:H_other}) ensures that the BaC $h_{\bar m}$ is negative in the state immediately after the mode transition.

The FSC is a disjunction of conditions corresponding to these two cases.  For case~(1), we can use the FSC in Eq.~\ref{eq:FSC} for continuous systems, specialized for mode $m$ (e.g., using the BaC for mode $m$) to check whether the BaC can become positive and whether an inadmissible state can be reached.   A \emph{reset state} for state $(m,x)$ under control action $u$ is a state that is reachable immediately after a mode transition (including application of the appropriate reset function) that can occur within one control period starting from state $(m,x)$ and with control action $u$ being applied. For case~(2), we first compute the set of reset states for the current state, and then check a condition similar to Eq.~\ref{eq:FSC} for each reset state.  This calculation for case~(2) leads to a more conservative FSC, because for simplicity, the scenarios can have a total duration of 2$\eta$, even though a more accurate calculation would take into account that the combined duration of the intervals before and after the mode switch is at most $\eta$.

\journal{We define the Taylor approximation of BaC $h_m$ in the same manner as its continuous counterpart in Eq.~\ref{eqn:taylor}, but with subscript $m$. Similarly, we also define $\lambda_m(u),$ $\mu_{\rm inc}(m,u)$, $\mu_{\rm dec}(m,u)$, $\dot{x}_{min}(m,u)$, $\dot{x}_{max}(m,u)$, and $\radmis(m,u)$ in a manner similar to Eqs.~\ref{eqn:remainderError},\ref{eqn:muOpt},\ref{eqn:mu1},\ref{eq:res_admis_reg} defining $\lambda(u)$, $\mu_{\rm inc}(u)$, $\mu_{\rm dec}(u)$, $\dot{x}_{min}(u)$, $\dot{x}_{max}(u)$, and $\radmis(u)$ respectively, except with an additional argument $m$ for the mode.}

\thesis{
Let $\hat{h}_m(x,u,\delta)$ denote the degree-$n$ Taylor approximation of BaC $h_m$'s value after time interval $\delta$, if control action $u$ is taken in continuous state $x$. Since we are usually interested in predicting the value one time step in the future, we use $\hat{h}_m(x,u)$ as shorthand for $\hat{h}_m(x,u,\eta)$. By Taylor's Theorem with the Lagrange form of the remainder, the remainder error of the approximation $\hat{h}_m(x,u)$ is:
}
\begin{equation}
\label{eqn:h_Taylor}
    \hat{h}_m(x,u,\delta)=h_m(x)+\sum_{i=1}^n \frac{h_m^{i}(x,u)}{i!} \delta^{i}
\end{equation}
\thesis{
An upper bound on the remainder error, if the state remains in the admissible region during the time interval, is:
}
\begin{equation}
\label{eqn:h_remainderError}
    \lambda_m(u)=\sup\left\{ \frac{|h_m^{n+1}(x,u)|}{(n+1)!} \eta^{n+1} : x \in \admis(m) \right\}
\end{equation}
\thesis{
The definition of restricted admissible region for mode $m$ is similar to Eq.~\ref{eq:res_admis_reg}, except modified to use the dynamics of mode $m$.  Given a control action $u$, a state $x$ can increase or decrease by at most $\mu_{\rm inc}(m,u)$ or $\mu_{\rm dec}(m,u)$, respectively, during one control period spent in mode $m$, where:
}
\begin{equation}
\begin{aligned}
\label{eqn:h_muOpt}
\mu_{\rm dec}(m,u)=|\min(0, \eta \dot{x}_{min}(m,u))|\\
\mu_{\rm inc}(m,u)=|\max(0, \eta \dot{x}_{max}(m,u))|
\end{aligned}
\end{equation}
\thesis{where $\dot{x}_{min}$ and $\dot{x}_{max}$ are vectors of solutions to the optimization problems:}
\begin{equation}
\begin{aligned}
\label{eqn:h_mu1}
\dot{x}_{min}(m,u) = \inf \{ f(m,x,u): x \in \admis(m) \}\\
\dot{x}_{max}(m,u) = \sup \{ f(m,x,u): x \in \admis(m) \}
\end{aligned}
\end{equation}
\thesis{
For any mode $m$ and control action $u$, the set $\radmis(m,u)$ of \emph{restricted admissible states} is created by shrinking the set $\admis(m)$ of admissible states by $\mu_{\rm inc}(m,u)$ and $\mu_{\rm dec}(m,u)$.}

\begin{equation}
\label{eqn:h_admis_r}
\radmis(m,u)=\{ x : x_{lb}(m)+\mu_{\rm dec}(m,u) < x < x_{ub}(m)-\mu_{\rm inc}(m,u) \}
\end{equation}

\begin{lemma}
\label{lemma:H_rar}
    For all $x \in \radmis(m,u)$ and all control actions $u$, if there is no mode transition during time period $\eta$, then $\textit{Reach}_{\le \eta}((m,x),u) \subseteq \admis(m)$.
\end{lemma}
\begin{proof}
Same as the proof of Lemma \ref{lemma:admis}, specialized to mode $m$.  
\end{proof}

To over-approximate the set of reset states that can occur in case~(2),
we first calculate an over-approximation $R_{nt}(m,x,u)$ of the set of reachable states from the current state $(m,x)$ with control action $u$ in one control period if no mode transition occurs (subscript ${nt}$ is mnemonic for ``no transition''). 
\begin{equation}
    \label{eqn:hybrid_resetstates}
    R_{nt}(m,x,u)=\{x': x-\mu_{\rm dec}(m,u) \leq x' \leq x+\mu_{\rm inc}(m,u)\}
\end{equation}

To obtain an over-approximation $B(m,x,u)$ of the set of reset states of $(m,x)$ under control action $u$, we find transition states $\hat x$ in $R_{nt}(m,x,u)$ and apply the reset function of the appropriate mode to each of them:
\begin{equation}
  \label{eq:resetstates}
    B(m,x,u)= \{ (\bar{m}, \bar{x})\,| \,
    \begin{array}[t]{@{}l@{}}
    \exists \bar{m} \in M,  \hat{x} \in R_{nt}(m,x,u) \cap t(m,\bar m):\ieeeonly{\\} \bar{x}=r_{m,\bar{m}}(\hat{x})\}
    \end{array}
\end{equation}

The proposed FSC is: 
\begin{equation}
    \label{eq:H_fsc}
    \exists (\bar{m},\bar{x}) \in \{(m,x)\} \cup  B(m,x,u):  
    \begin{array}[t]{@{}l@{}}
    \hat{h}_{\bar{m}}(\bar{x},u) + \lambda_{\bar{m}}(u)  > 0\ieeeonly{\\}
    \,\lor\, \bar{x} \notin \radmis(\bar{m},u) 
    \end{array}
\end{equation}
Note that the two arguments of the union correspond to the two cases described above.  Now we prove safety of this proposed FSC.
\begin{theorem}
\label{thm:H_FSC}
    Given a set of functions $h_m$, one for each mode $m$ in $M$, that form a barrier certificate, the condition in Eq.~\ref{eq:H_fsc} is a forward switching condition.
\end{theorem}
\begin{proof}
    Intuitively, the condition in Eq.~\ref{eq:H_fsc} is a disjunction of FSCs as defined in Eq.~\ref{eq:FSC} for each mode reachable from the current state in time period $\eta$. 
    Formally, let $x_{U}$ denote an unsafe or unrecoverable continuous state reachable within control period $\eta$ from the current state. Let $m'$ denote the mode of continuous state $x_{U}$.  We consider two cases depending on whether a mode transition occurs during the next control period.

\noindent
    Case 1: no mode transition occurs.     In this case, the reasoning in the proof of Theorem~\ref{thm:FSC} applies.

\noindent
    Case 2: a mode transition occurs. From the definition~(\ref{def:H_BaC}) of a BaC for a hybrid system, we know that if the BaC in the current state is non-positive, then transitions from this state to other modes are safe.  So in the case of a mode transition, the only way an unsafe continuous state $x_U$ can be reached is after a mode transition to a safe reset state. Hence in this case, we want to determine in which mode the plant ends up in after a mode transition and use the corresponding BaC and continuous state to evaluate the Taylor approximation-based FSC. Recall that $R_{nt}(m,x,u)$ is an over-approximation of the set of reachable states from the current state in one control period, i.e., $Reach_{\leq \eta}((m,x),u) \subseteq R_{nt}(m,x,u)$.  Using the reset function for every state in $R_{nt}$ and mode in transition from mode $m$, we obtain the corresponding reachable state after the mode transition in set  $B(m,x,u)$. We now have all the reachable modes from current state in  and their corresponding reachable continuous state in set $B$. Hence, if an unsafe continuous state $x_U$ is in the reachable set $\bigcup_{(\hat{m},\hat{x}) \in B}Reach_{\leq \eta}((\hat{m},\hat{x}),u)$, then the upper bound on the Taylor approximation will be positive for at least one of the states in $B$. Now we check if an unsafe continuous state can be reached for a particular mode using the upper bound on the Taylor approximation of the corresponding BaC at all reachable continuos states in that mode. The proof of correctness of this safety condition is the same as in Case~1.  Since we want all possible mode transitions to be safe, we take the disjunction of all of these conditions to get the switching condition.  
\end{proof}

The FSC in Eq.~\ref{eq:H_fsc} might be difficult or expensive to calculate.  First, we consider the difficulty of calculating $B$, and then the difficulty of calculating the FSC, given $B$.  Note that $R_{nt}$ is a box, i.e., a rectangular region.  If each $t(m,\bar m)$ is also a box, and the reset functions $r_{m, \bar m}$ are built component-wise of identity functions and constant functions (i.e., the $i$th component of $r_{m, \bar m}(x)$ is either $x_i$ or a constant), then $B(m,x,u,\eta)$ is a union of boxes, each of which can be computed in a straightforward way by intersection and projection of boxes. Note that projection corresponds to constant components of reset functions.  Our case studies satisfy these conditions.

Now consider the difficulty of calculating the FSC, given that $B$ is a union of boxes. For each disjunct in the FSC, we consider how to check whether it holds at any point in a given box.  Checking this for the second disjunct just requires intersecting two boxes and checking whether the result is empty (recall that $\radmis(\bar{m},u)$ is also a box).  Checking this for the first disjunct requires finding the maximum of $\hat{h}_{\bar{m}}(\bar{x},u)+\lambda_{\bar{m}}(u)$ when $\bar{x}$ ranges over a box, and comparing the result with zero.  For brevity, let $g_m(x) =\hat{h}_{m}(x,u)+\lambda_{m}(u)$ (we omit $u$ as an argument of $g_m$, because $u$ is constant from the perspective of this optimization problem).  The difficulty of finding $g_m$'s maximum over a box depends on its functional form (for its dependence on $x$).  In general, a non-linear optimizer might need to be used, and, depending on the functional form of $g_m$, there might be a risk of the non-linear optimizer returning a local maximum rather than the global maximum in the box.  The running time of the non-linear optimizer is also a potential concern, since it would be run  as part of the switching condition computation.  As a preliminary exploration of this, we applied MATLAB's fmincon to some representative instances of this optimization problem in which $g_m$ was a degree-4 polynomial in $x$, and the average running time was 0.27 seconds.  

Finding the maximum is easy if $g_m$'s dependence on $x$ is polynomial with degree 3 or less.  In this case, $g_m$'s derivative with respect to $x$ is quadratic, so the derivative's zeros can be found easily.  Note that the extremal values of any function over a box can occur only at (1)~points in the box where the function's derivative is zero, or (2)~the corners of the box.

We consider some specific functional forms of $g_m$ based in part on the forms that arise in our case studies.  We typically learn BaCs $h_m$ with ReLU activation functions, following~\cite{Zhao2021}; hence they are piecewise-linear functions of $x$.  If $f_m$ is also piecewise-linear, and we use a Taylor approximation of degree~3 (or less) for $\hat h_m$, then $g_m$ is a polynomial with degree~3 or less as a function of $x$, and the switching condition can be calculated relatively easily and inexpensively, without using a non-linear optimizer.  Suppose instead that $f_m$ is an approximate dynamics in the form of neural ODEs with sigmoid activation functions.  In this case, $g_m$ is also nonlinear.  While we could simply use a nonlinear optimizer in this case, more efficient approaches might also be possible, considering that the sigmoid function itself is monotonic, and its derivative decreases monotonically on each side of its single local maximum.
Exploring this in more detail is future work.

Finally, when deriving the BaC and switching condition for a controller, we do not need to consider a mode corresponding to the AC for that controller, because the BaC and switching condition depend only on the system behavior when the BC is in control.  However, if a system has multiple instances of
Simplex, in general we need to consider modes corresponding to the ACs and the BCs for the other Simplex instances, if the instances switch between controllers independently of each other.

\section{Extending \name to Hybrid Systems with Approximate Dynamics }
\label{sec:approxHybrid}
\label{ch:apprHybrid}
This \journal{Section}\thesis{Chapter} combines the extensions to hybrid systems and approximate dynamics.
Consider a hybrid system $\Pi$ as defined in Definition~\ref{def:hybrid_sys}. The approximate dynamics in mode $m$ for control action $u$ is:
\begin{equation}
    \dot{x}=\appr{f}(m,x,g_m(x)), (m,x) \in D,
\end{equation}

\begin{definition}
\label{def:h_derivativeerror}
\textbf{Derivative Error} $\epsilon_i(m)$ is the maximum absolute error between the $i^{th}$-degree derivatives of the actual and approximate dynamics with respect to states in mode $m$.
\end{definition}
\begin{equation}
\label{eqn:h_dererror}
    \epsilon_i(m)= \sup \{|\frac{\partial^i f(m,x,g_m(x))}{\partial x^i} - \frac{\partial^i \appr{f}(m,x,g_m(x))}{\partial x^i}| : x \in \admisnew(m)\} 
\end{equation}

\subsection{Impact of Approximate Hybrid Dynamics on Learning and Verification of the BaC}

We modify the definition of a BaC for hybrid systems (Definition~\ref{def:H_BaC}) to take the error bounds of approximate dynamics into account, in a manner similar to how we extended the definition of barrier certificate for continuous systems (Definition~\ref{def:BaC}) to take those error bounds into account to obtain the definition of BaC for systems with approximate dynamics (Definition~\ref{def:appr_bac}).

\begin{definition}%
A set of functions $h_m(x)$, for mode $m$ in $M$, is a BaC of a hybrid system with approximate dynamics $\appr{f}(m,x)$ if it satisfies the following conditions:
\label{def:H_appr_bac}
\begin{align}
\label{eq:A_H_init}
    & \appr{h}_m(x) < 0 , \forall x \in I(m) \\
\label{eq:A_H_unsafe}
    & \appr{h}_m(x) > 0, \forall x \in U(m)\\
\label{eq:A_H_der}
    &\frac{\partial \appr{h}_m(x) }{\partial x}y \le 0, 
\begin{array}[t]{@{}l@{}}
    \forall x \mbox{\rm\ s.t. } \appr{h}_m(x)=0,\smallskip\\
    \forall y \in [
\begin{array}[t]{@{}l@{}}
    \appr{f}(m,x,g_m(x))-\epsilon_0(m),\ieeeonly{\\}
    \appr{f}(m,x,g_m(x))+\epsilon_0(m)]
\end{array}
\end{array}\\
\label{eq:A_H_trans}
    &\appr{h}_{m'}(x') \leq 0, \forall x \in t(m,m') \mbox{\rm\ s.t. } \appr{h}_m(x)\leq 0 \land x' = r_{m,m'}(x) 
\end{align}
\end{definition}

Note that in Eq.~\ref{eq:A_H_der}, the BaC $\appr{h}_{m'}$ is BaC of mode $m'$ that uses approximate dynamics of mode $m'$, $\appr{f}(m',x',g_{m'}(x'))$. If actual dynamics are available for mode $m'$, then derivative error $\epsilon_0(m')$ will be zero and $\appr{h}_{m'}$ will have the same definition of a BaC as Definition~\ref{def:H_BaC}.

\begin{theorem}
\label{thm:h_approxCond}
Given a set of the functions $\appr{h}_m$, one for each mode $m$ in $M$, if the set of functions satisfies Definition~\ref{def:H_appr_bac}, then it is a BaC for the actual dynamics $f$.
\end{theorem}
\journal{
\begin{proof}
    The first, second, and fourth conditions in the definition of BaC for hybrid systems with exact dynamics (Definition~\ref{def:H_BaC}) follow directly from the corresponding conditions in Definition~\ref{def:H_appr_bac}.  The proof that the third condition (the derivative condition) holds is essentially the same as in the proof of the analogous theorem for continuous systems with approximate dynamics (Theorem~\ref{thm:approxCond}); since that condition applies to each mode separately, the argument trivially generalizes to this setting.
\end{proof}
}
\thesis{
\begin{proof}
A BaC for dynamics $f$ should be non-positive over all states in the safety region, it should be positive over the unsafe region, and during a mode transition, the BaC of the reset state should be non-positive. As seen from Eqs.~\ref{eq:A_H_init}, \ref{eq:A_H_unsafe}, and~\ref{eq:A_H_trans}, $\appr{h}_m$ satisfies these conditions. Along with these conditions, a BaC's time derivative should be non-increasing on the boundary. We will now prove that $\appr{h}_m$ also satisfies this condition.
The condition in Eq.~\ref{eq:A_H_der} ensures that when the approximate dynamics are used, we check that the derivative of $\appr{h}_m$ is non-increasing at all possible values that $f$ can take. Since approximate dynamics have $\epsilon_0(m)$ uncertainty, we make sure that the time derivative of $\appr{h}_m$ is non-decreasing over a box around $\appr{f}(m,x)$, i.e., $ y \in [\appr{f}(m,x)-\epsilon_0(m) , \appr{f}(m,x)+\epsilon_0(m) ]$. I.e., $\appr{h}_m$ satisfies the BaC conditions for all the dynamics that are bounded by box of size $\epsilon_0(m)$. For all states in the admissible region $\admisnew$, we have $\appr{f}(m,x)-\epsilon_0(m) \leq f(m,x) \leq \appr{f}(m,x)+\epsilon_0(m)$. Thus, the candidate BaC $\appr{h}_m$ indeed satisfies all of the necessary conditions of a BaC for dynamics $f$.
\end{proof}
}

\noindent
\paragraph*{Training and Verification of Neural BaC for Approximate Hybrid Dynamics}

To train and verify neural BaCs for hybrid systems with approximate dynamics, we apply the process described in Chapter~\ref{ch:ApprxDyn} for training and verifing BaCs for continuous systems with approximate dynamics.
We add a new variable to account for the approximate dynamics's error in both the loss function used to train a neural BaC to satisfy the derivative condition and the optimization problem used to verify that a candidate neural BaC satisfies the derivative condition, similar as in the loss function and optimization problem of derivative condition for approximate dynamics given in Eqs.~\ref{eqn:newloss} and~\ref{eqn:approxpwlenc}, respectively, except we add a new parameter for the mode $m$. The training and verification of all other conditions remain the same as in~\cite{Zhao2020}.

\subsection{Switching Conditions for Approximate Hybrid Systems}

As in \journal{Section}\thesis{Chapter} \ref{ch:hybrid}, we assume that at most one mode change can occur in one control period.  The overall design of the FSC is the same as in \journal{Section}\thesis{Chapter} \ref{ch:hybrid} for hybrid systems, with changes similar to those made in \journal{Section}\thesis{Chapter}~\ref{ch:ApprxDyn} to ensure the original FSC holds for all dynamics consistent with the approximate dynamics and error bounds.

The upper bound on the degree-$n$ Taylor approximation of BaC $\appr{h}_m$ for mode $m$, combining features of the corresponding Eq.~\ref{eqn:generalEqn} for approximate dynamics and Eq.~\ref{eqn:h_Taylor} for hybrid systems,  is given by:
\begin{equation}
     \label{eqn:A_generalEqn}
     \artfmsdonly{
     \begin{aligned}
      \hat{\appr{h}}_m^U(x,u,\epsilon_0(m),\epsilon_1(m), \ldots ,\epsilon_n(m)) = \sup \{ \appr{h}_m(x,u)+\sum_{i=1}^{n}\frac{\appr{h}_m^{i}(x,u,\delta_0, \ldots ,\delta_i)}{i!} \eta^i:\\ \delta_0 \in [-\epsilon_0(m),\epsilon_0(m)],
      \delta_1 \in [-\epsilon_1(m),\epsilon_1(m)], \ldots, \delta_n \in [-\epsilon_n(m),\epsilon_n(m)] \}
      \end{aligned}}
     \ieeeonly{
     \begin{aligned}
      &\hat{\appr{h}}_m^U(x,u,\epsilon_0(m),\epsilon_1(m), \ldots ,\epsilon_n(m)) =\\
      &\sup \{ \appr{h}_m(x,u)+\sum_{i=1}^{n}\frac{\appr{h}_m^{i}(x,u,\delta_0, \ldots ,\delta_i)}{i!} \eta^i: \delta_0 \in [-\epsilon_0(m),\epsilon_0(m)],\\
      &\quad\quad\delta_1 \in [-\epsilon_1(m),\epsilon_1(m)], \ldots, \delta_n \in [-\epsilon_n(m),\epsilon_n(m)] \}
      \end{aligned}}
\end{equation}

The remainder error for this Taylor approximation, combining features of the corresponding Eq.~\ref{eq:approxlambda} for approximate dynamics and Eq.~\ref{eqn:h_remainderError} for hybrid systems, is:
\begin{equation}
\label{eq:approxHybridlambda}
\artfmsdonly{
\begin{aligned}
\appr{\lambda}_m(u,\epsilon_0(m),\epsilon_1(m), \ldots ,\epsilon_{n+1}(m))=&\sup \{  \frac{|\appr{h}_m^{n+1}(x,u,\delta_0,\delta_1, \ldots ,\delta_{n+1})|}{(n)!} \eta^{n}:
x \in \admisnew,\\
&\artfmsdonly{\quad}\delta_0 \in [-\epsilon_0(m),\epsilon_0(m)],\delta_1 \in [-\epsilon_1(m),\epsilon_1(m)], \ldots ,\\
&\artfmsdonly{\quad}\delta_{n+1} \in [-\epsilon_{n+1}(m),\epsilon_{n+1}(m)] \}
\end{aligned}}
\ieeeonly{
\begin{aligned}
&\appr{\lambda}_m(u,\epsilon_0(m),\epsilon_1(m), \ldots ,\epsilon_{n+1}(m)) =\\
&\quad\sup \{  \frac{|\appr{h}_m^{n+1}(x,u,\delta_0,\delta_1, \ldots ,\delta_{n+1})|}{(n)!} \eta^{n}: x \in \admisnew, \\
&\quad\quad\quad\delta_0 \in [-\epsilon_0(m),\epsilon_0(m)], \delta_1 \in [-\epsilon_1(m),\epsilon_1(m)], \ldots ,\\
&\quad\quad\quad\delta_{n+1} \in [-\epsilon_{n+1}(m),\epsilon_{n+1}(m)] \}
\end{aligned}}
\end{equation}

The restricted admissible region for mode $m$, combining features of the corresponding Eq.~\ref{eqn:approxadmisr} for approximate dynamics and Eq.~\ref{eqn:h_admis_r} for hybrid systems, is:
\begin{equation}
\label{eqn:A_h_approxadmisr}
\artfmsdonly{\appr{\radmis}(m,u)=\{ x : x_{lb}(m)+\appr{\mu}_{\rm dec}(m,u,\epsilon_0) < x < x_{ub}(m)-\appr{\mu}_{\rm inc}(m,u,\epsilon_0) \}}
\ieeeonly{
\begin{aligned}
&\appr{\radmis}(m,u)=\\
&\quad\{ x : x_{lb}(m)+\appr{\mu}_{\rm dec}(m,u,\epsilon_0) < x < x_{ub}(m)-\appr{\mu}_{\rm inc}(m,u,\epsilon_0) \}
\end{aligned}}
\end{equation}
where vectors $\appr{\mu}_{\rm dec}(m,u,\epsilon_0(m)),\appr{\mu}_{\rm inc}(m,u,\epsilon_0(m))$ are obtained as follows:
\begin{equation}
\begin{aligned}
\label{eqn:A_h_muOptapprox}
\appr{\mu}_{\rm dec}(m,u,\epsilon_0) &= \eta (|\min(0, \appr{\dot{x}}_{min}(m,u))|+ \epsilon_0(m))\\
\appr{\mu}_{\rm inc}(m,u,\epsilon_0) &= \eta (|\max(0, \appr{\dot{x}}_{max}(m,u))|+\epsilon_0(m))
\end{aligned}
\end{equation}
where $\appr{\dot{x}}_{min}$ and $\appr{\dot{x}}_{max}$ are vectors of solutions to the optimization problems: \begin{equation}
\begin{aligned}
\label{eqn:mu1approxHybrid}
\appr{\dot{x}}_{min}(m,u) &= \inf \{ \appr{f}(m,x,u): x \in \admisnew(m) \}\\
\appr{\dot{x}}_{max}(m,u) &= \sup \{ \appr{f}(m,x,u): x \in \admisnew(m) \}
\end{aligned}
\end{equation}

\begin{lemma}
\label{lemma:h_maxmu}
$\appr{\mu}_{\rm dec}(m,u,\epsilon_0)$ and $\appr{\mu}_{\rm inc}(m,u,\epsilon_0)$ given by Eq.~\ref{eqn:A_h_muOptapprox} are the largest amounts by which a state $x$ can decrease and increase, respectively, during time $\eta$ provided that $x$ remains in admissible region $\admisnew(m)$.
\end{lemma}
\begin{proof}
    Same as the proof of Lemma~\ref{lemma:maxmu}, specialized to mode $m$.
\end{proof}

\begin{lemma}
\label{lemma:H_A_rar}
    Let $\appr{f}(m,x)$ be an approximate dynamics for mode $m$ with error bound $\epsilon_0(m)$. For all $x \in \appr{\radmis}(m,u)$ and all control actions $u$, $\textit{Reach}_{\le \eta}((m,x),u) \subseteq \admis(m)$ if there is no mode transition during time period $\eta$.
\end{lemma}
\begin{proof}
Same as the proof of Lemma \ref{lemma:approxAlpha}, specialized to mode $m$.  
\end{proof}

An over-approximation $\appr{R}_{nt}(m,x,u)$ of the set of reachable states from the current state $(m,x),$ with control action $u$ in one control period if no mode transition occurs, analogous to Eq.~\ref{eqn:hybrid_resetstates} for hybrid systems with exact dynamics, is given by:
\begin{equation}
    \label{appr_h_resetstates}
    \artfmsdonly{\appr{R}_{nt}(m,x,u,\epsilon_0)=\{x': x-\appr{\mu}_{\rm dec}(m,u,\epsilon_0) \leq x' \leq x+\appr{\mu}_{\rm inc}(m,u,\epsilon_0)\}}
    \ieeeonly{
    \begin{aligned}
    &\appr{R}_{nt}(m,x,u,\epsilon_0)=\\
    &\quad\{x': x-\appr{\mu}_{\rm dec}(m,u,\epsilon_0) \leq x' \leq x+\appr{\mu}_{\rm inc}(m,u,\epsilon_0)\}
    \end{aligned}}
\end{equation}
An over-approximation $B(m,x,u)$ of the set of reset states of $(m,x)$ under control action $u$, analogous to Eq.~\ref{eq:resetstates}, is given by:
\begin{equation}
    \appr{B}(m,x,u,\epsilon_0) = \bigcup_{\bar{m} \in M, \hat{x} \in \appr{R}_{nt}(m,x,u,\epsilon_0) \cap t(m,\bar m)} \{(\bar m, r_{m,\bar{m}}(\hat{x})\}
    \label{eq:resetstates_appr}
\end{equation}

The proposed FSC, analogous to Eq.~\ref{eq:H_fsc}, is:
\begin{equation}
    \label{eq:A_H_fsc}
    \artfmsdonly{
    \begin{aligned}
    \exists (\bar{m},\bar{x}) \in \{(m,x)\} \cup  \appr{B}(m,x,u,\epsilon_0):  & \hat{\appr{h}}_{\bar{m}}^U(\bar{x},u,\epsilon_0(\bar{m}),\epsilon_1(\bar{m}), \ldots ,\epsilon_n(\bar{m})) + \\ 
    &\appr{\lambda}_{\bar{m}}(u,\epsilon_0(\bar{m}),\epsilon_1(\bar{m}), \ldots ,\epsilon_{n+1}(\bar{m}))  > 0 \\
    & \lor \bar{x} \notin \appr{\radmis}(\bar{m},u,\epsilon_0)
    \end{aligned}}
    \ieeeonly{
    \begin{aligned}
    &\exists (\bar{m},\bar{x}) \in \{(m,x)\} \cup  \appr{B}(m,x,u,\epsilon_0):\\ &\quad\hat{\appr{h}}_{\bar{m}}^U(\bar{x},u,\epsilon_0(\bar{m}),\epsilon_1(\bar{m}), \ldots ,\epsilon_n(\bar{m})) + \\ 
    &\quad\appr{\lambda}_{\bar{m}}(u,\epsilon_0(\bar{m}),\epsilon_1(\bar{m}), \ldots ,\epsilon_{n+1}(\bar{m}))  > 0 \\
    &\quad\lor \bar{x} \notin \appr{\radmis}(\bar{m},u,\epsilon_0)
    \end{aligned}}
\end{equation}
\begin{theorem}
\label{thm:H_appr_FSC}
    Given a set of functions $\appr{h}_m$, one for each mode $m$ in $M$, that form a barrier certificate as defined in Definition~\ref{def:H_appr_bac} for approximate dynamics $\appr{f}(m,x)$, the condition in Eq.~\ref{eq:A_H_fsc} is a forward switching condition.
\end{theorem}
\begin{proof}
    The condition in Eq.~\ref{eq:A_H_fsc} is a disjunction of FSCs as defined in Eq.~\ref{eq:FSC} for each mode reachable from the current state in time period $\eta$. 
    Formally, let $x_{U}$ denote an unsafe or unrecoverable state reachable within control period $\eta$ from the current state. Let $m'$ denote the mode of state $x_{U}$. Then we have two cases depending on whether the mode-transition event occurs during control period $\eta$.
    
\noindent    
    Case 1: \textit{No mode transition}. 
    As there is no mode transition, we follow the proof given in Theorem~\ref{thm:approx-fsc}.

\noindent
    Case 2: \textit{Mode transition}.
    We follow the proof of case~2 in Theorem~\ref{thm:H_FSC} with two changes. First, we use the set $\appr{R}_{nt}$ instead of $R_{nt}$.  Note that $\appr{R}_{nt}$ is an over-approximation of $R_{nt}$, i.e., $R_{nt}(m,x,u) \subseteq \appr{R}_{nt}(m,x,u,\epsilon_0)$; this follows from Lemma~\ref{lemma:h_maxmu} and the definitions of these two sets in Eqs.~\ref{eqn:hybrid_resetstates} and~\ref{appr_h_resetstates}.  Second, we use the set $\appr{B}(m,x,u,\epsilon_0)$ instead of $B(m,x,u)$.  Note that $\appr{B}(m,x,u,\epsilon_0)$ is an over-approximation of set $B(m,x,u)$, i.e., $B(m,x,u) \subseteq \appr{B}(m,x,u,\epsilon_0)$; this follows from  the definitions of these sets given in  Eqs.~\ref{eq:resetstates} and~\ref{eq:resetstates_appr}.
\end{proof}

\paragraph*{Conformance of Approximate Hybrid Dynamics}

To calculate the $\epsilon_i$'s, we again use approximate trace conformance. The definition and algorithm in \journal{Section}\thesis{Chapter}~\ref{sec:atc} to estimate the conformance degree between real and approximate continuous system models generalize straightforwardly to hybrid systems by including the mode in the state. 

\thesis{\newpage}

\unused{
\subsection{Conformance of Approximate Hybrid Dynamics}

In this \journal{Section}\thesis{Chapter}, we establish approximate trace conformance between approximate and real hybrid dynamical models. 
Consider a real hybrid system given by
\begin{equation}
\label{eqn:H_realDyn}
    \Pi_{real}=
    \begin{cases}
        \dot{x}=f(m,x,g_m(x)) , (m,x) \in D, u \in \mathbb{U}, \\
        y=g(m,x), (m,x) \in D, \\
    \end{cases}
\end{equation}
and approximate hybrid dynamics given by:

\begin{equation}
\label{eqn:H_apprDyn}
    \Pi_{appr}=
    \begin{cases}
        \dot{x}=\appr{f}(m,x,u) , (m,x) \in D, u \in \mathbb{U}, \\
        y=g(m,x), (m,x) \in D, \\
    \end{cases}
\end{equation}
Here, $g$ is a vector-valued output function that operates on the state of the system.  We define \textit{timed sequences} for hybrid systems. 
\begin{definition}
     Let $N$ be a positive integer, $T$ be a positive real, $Y$ be a set (of possible outputs of a system), and $M$ be a set of modes.  A \textbf{hybrid timed sequence} (TS) %
     is a function $\theta : [0,1,\ldots,N]\rightarrow Y \times M \times [0, T]$ such that for all $i \in {0,1,...,N}, \theta(i)=(y,m,t(i))$ with $t(0)=0,t(i) \leq t(i+1), m \in M$ and $y \in Y$.
\end{definition}

We also define $(\tau,\gamma)$-closeness of hybrid timed sequences as follows. 
\begin{definition}%
Given a time interval $[0,T]$ and parameters $\tau,\gamma$, two TSs $\theta=(y,m,t)$ and $\theta'=(y',m',t')$ are $(\tau,\gamma)$-close ($\theta \simeq_{\tau,\gamma} \theta'$) if:
\begin{itemize}
     \item for all $i \in dom(\theta)$, $t(i)\leq T$, there exists $j \in dom(\theta')$ such that $t(i)=t'(j), m(i)=m'(j), |t(i)-t'(j)|<\tau$ and $|y(i)-y'(j)|<\gamma$.   
    \item for all $k \in dom(\theta'),$ $t'(k) \leq T$, there exists $c \in dom(\theta)$ such that $t'(k)=t(c),
    m'(k)=m(c),
    |t'(k)-t(c)|<\tau$ and $|y'(k)-y(c)|<\gamma$.
\end{itemize}
\end{definition}

\begin{definition}
Approximate dynamical system $\Pi_{appr}$ is approximate trace conformant to real dynamical system $\Pi_{real}$ with $(\tau,\gamma)$-closeness, denoted as $\Pi_{real}\preceq_{(\tau,\gamma)}\Pi_{appr}$, if for all TS $\theta \in \Pi_{real}$, there exists a TS $\theta' \in \Pi_{appr}$ such that $\theta$ and $\theta'$ are $(\tau,\gamma)$-close.
\end{definition}
Using the above definition of approximate trace conformance, we follow algorithm given in \journal{Section}\thesis{Chapter}~\ref{ch:Conformance} with mode considerations to calculate the conformance degree between the approximate and real hybrid dynamics for each mode. We use this vector CD to then infer values of derivative errors of higher degrees using the same process described in the previous section. 
}

\section{Neural Controllers}
\label{sec:NC}
\label{ch:NC}
To help address the control challenges posed by microgrids, the application of \emph{neural networks for microgrid control} is on the rise~\cite{Garcia2020}. Increasingly, reinforcement learning (RL) is being used to train powerful Deep Neural Networks (DNNs) to produce high-performance MG controllers.
We present our approach for learning neural controllers (NCs) in the form of DNNs representing deterministic control policies.  Such a DNN maps system states (or raw sensor readings) to control inputs.  The specific RL algorithm we use is Deep Deterministic Policy Gradient (DDPG)~\cite{Lillicrap2016}, with the safe learning strategy of penalizing unrecoverable actions~\cite{dung2019_nsa}. DDPG was chosen because it works with deterministic policies and is compatible with continuous action spaces.

\shortonly{
We consider a standard RL setup consisting of an agent interacting with an environment in discrete time. At each time step $t$, the agent receives a (microgrid) state $x_t$ as input, takes an action $a_t$, and receives a scalar reward $r_t$. The DDPG algorithm employs an \emph{actor-critic framework}. The actor generates a control action and the critic evaluates its quality. In order to learn from prior knowledge, DDPG uses a replay buffer to store training samples of the form $(x_t, a_t, r_t, x_{t+1})$. At every training iteration, a set of samples is randomly chosen from the replay buffer. For further details regarding the implementation of the DDPG algorithm, please refer to Algorithm~1 in~\cite{Lillicrap2016}.
}

\fullonly{
\paragraph*{Learning Neural Controllers}

The DDPG algorithm is a model-free, off-policy Reinforcement Learning method. \emph{Model-free} means that the algorithm does not have access to a model of the environment (in our case, the microgrid dynamics).  While model-free methods forego the potential gains in sample efficiency from using a model, they tend to be easier to implement and tune. An \emph{off-policy} learner learns the value of the optimal policy independently of the current learned policy.  A major challenge of learning in continuous action spaces is exploration. An advantage of off-policy algorithms such as DDPG is that the problem of exploration can be treated independently from the learning algorithm~\cite{Lillicrap2016}. Off-policy learning is advantageous in our setting because it enables the NC to be (re-)trained using actions taken by the BC rather than the NC or the learning algorithm. The benefits of off-policy retraining are further considered in Section~\ref{sec:AM}.

We consider a standard RL setup consisting of an agent interacting with an environment in discrete time. At each time step $t$, the agent receives a (microgrid) state $x_t$ as input, takes an action $a_t$, and receives a scalar reward $r_t$. The DDPG algorithm employs an \emph{actor-critic framework}. The actor generates a control action and the critic evaluates its quality. In order to learn from prior knowledge, DDPG uses a replay buffer to store training samples of the form $(x_t, a_t, r_t, x_{t+1})$. At every training iteration, a set of samples is randomly chosen from the replay buffer. For details of the DDPG algorithm, see Algorithm~1 in~\cite{Lillicrap2016}.

}

To learn an NC for DER voltage control, we designed the following reward function, which guides the actor network to learn the desired control objective. 
\begin{equation}
    \label{eq:reward}
    r(x_t,a_t) = 
    \begin{cases}
        -1000  & \text{if FSC}(x_t,a_t)\\
        \;\;\;\;\; 100  & \text{if }v_{od} \in [v_{\textit{ref}}-\epsilon, v_{\textit{ref}}+\epsilon]\\
        - w \cdot \left( v_{od} - v_{\textit{ref}} \right)^2  & \text{otherwise}\\
    \end{cases}
\end{equation}
where $w$ is a weight ($w=100$ in our experiments), $v_{od}$ is the $d$-component of the output voltage of the DER whose controller is being learned, $v_\textit{ref}$ is the reference or nominal voltage, and $\epsilon$ is the tolerance threshold. We assign a high negative reward for triggering the FSC, and a high positive reward for reaching the tolerance region ($v_{\textit{ref}} \pm \epsilon$). The third clause rewards actions that lead to a state in which the DER voltage is closer to its reference value but not in the tolerance region. Since this reward function gives a positive reward only when the voltage is in the tolerance region, the NC learns to keep the voltage in the tolerance region as much as possible.

\paragraph*{Adversarial Inputs}

Neural controllers obtained via deep RL algorithms are vulnerable to \emph{adversarial inputs}: those that lead to a state in which the NC produces an unrecoverable action, even though the NC behaves safely on very similar inputs.  NSA provides a defense against these kinds of attacks. If the NC proposes a potentially unsafe action, the BC takes over in a timely manner, thereby guaranteeing the safety of the system.  To demonstrate NSA's resilience to adversarial inputs, we use a gradient-based attack algorithm~\cite[Algorithm 4]{pattanaik2017robust} to construct such inputs, and show that the DM switches control to the BC in time to ensure safety.

Note that this algorithm does not guarantee the successful generation of adversarial inputs every time it is executed.  The success rate is inversely related to the quality of the training of the NC.  In our experiments the highest success rate for adversarial input generation that we observed is $0.008\%$.

\paragraph*{Poisoned Neural Controllers}
\label{sec:pnc}

Another potential vulnerability of systems with NCs is that the NCs might be {\em poisoned}, i.e., trained to exhibit malicious (e.g., unsafe) behavior in some special cases, while performing well most of the time (to hide their malicious nature during acceptance testing).  The training data for a poisoned controller includes some {\em malicious traces} that introduce the malicious behavior.

To generate malicious traces, we use a malicious controller that causes
safety violations.
We use a carefully selected set of malicious traces to control the conditions under which the poisoned NC will
misbehave.
Our methodology for malicious training is as follows.  First, determine from existing simulations the approximate range of values of each state variable in the initial states, and verify that the values are approximately uniformly distributed across the range.  For each variable, select a subrange whose size is a fraction $p$ of the size of the variable's entire range. %
Second, generate traces of the MG's behavior with the malicious controller, and discard a trace if any state variable's initial value lies outside the selected subrange for that variable.  The  fraction of traces that are retained (out of the total number of traces) is approximately $p^d$, where $d$ is the dimension of the input state.  We use a complementary filter for the benign (normal) traces used in the training of the poisoned controller, so that no benign training occurs in the region of the initial state space chosen for malicious training.  
An NC trained using the selected benign traces and selected malicious traces is expected to lead to a safety violation in fraction $p^d$ of the runs and perform normally in other runs. Different reward functions are used when training from malicious traces and when training from benign traces; the functions reward malicious and benign behavior, respectively.

\section{Adaptation Module}
\label{sec:AM}
The Adaptation Module (AM) performs online retraining of the NC when the NC produces an unrecoverable action that causes the DM to failover to the BC.  With retraining, the NC is less likely to repeat the same or similar mistakes in the future, allowing it to remain in control of the system more often, thereby improving performance.  We use Reinforcement Learning with the reward function defined in Eq.~\ref{eq:reward} for online retraining. 

As in initial training, we use the DDPG algorithm (with the same settings) for online retraining.
We reuse the pool of training samples (DDPG’s experience replay buffer) from initial training of the NC, by adding the training samples from online retraining to it.  As in~\cite{dung2019_nsa}, we found that this evolves the policy in a more stable fashion, as retraining samples gradually replace initial training samples in the pool.  Another benefit of reusing the initial training pool is that retraining of the NC can start almost immediately, without having to wait for enough samples to be collected online.

\shortonly{
We use off-policy retraining i.e., at every time step while the BC is active, the BC’s action is used in the training sample.  The reward for the BC's action is based on the observed next state of the system.
}

\fullonly{
We experimented with two methods of generating online retraining samples:
\begin{itemize}
    \item Off-policy retraining: At every time step while the BC is active, the BC’s action is used in the training sample.  The reward for the BC's action is based on the observed next state of the system.
    \item Shadow-mode retraining: At every time step while the BC is active, the AM takes a sample by running the NC in shadow mode to compute its proposed action, and then simulates the behavior of the system for one time step to compute a reward for it.
\end{itemize}

In our experiments, both methods produce comparable benefits.  Off-policy retraining is therefore preferable because it does not require simulation (or a dynamic model of the system) and hence is less costly.
}

\section{Experimental Evaluation}
\label{sec:EE_BCM}
\label{ch:Experiments_BCM}

We applied \name to two realistic microgrid models.  Both MGs are modeled and simulated with RTDS~\cite{rtds}, an industry-standard high-fidelity, real-time power systems simulator.  The first MG, which we refer to as the \emph{RTDS Sample MG}, is well known because it is included in the RTDS distribution.  It is described in a conference paper~\cite{MSThesis} and in more detail in an M.S.\ thesis~\cite{MSThesis2}.  We applied \name to the PV (photovoltaic) voltage controller to ensure that the PV voltage is within $\pm 5\%$ of the reference voltage.

The second MG is the \emph{Bronzeville Community Microgrid} (BCM), a real-world MG currently operational in the Bronzeville Community in Chicago, IL.  A detailed RTDS model of the MG was provided to us by Commonwealth Edison, which operates the MG.  Figure \ref{fig:bcm-diagram} shows the one-line diagram of BCM. As described in~\cite{sharma2022}, BCM consists of two independent 12 kV distribution feeders connected to the main grid via two point-of-interconnection (POI) switches POI-1 and POI-2. The top feeder, subsystem-1 (SS-1), has distributed loads, distribution automation (DA) devices, and DERs---a cluster of gas generators (GGs) totaling 4.8 MW, a 500 kW/2 MWh battery energy storage system (BESS), and a 750 kW solar
PV.
The lower feeder, subsystem-2 (SS-2), consists of loads and DA devices, but no DERs. These two subsystems can also be interconnected via the tie switches (T1, T2, or T3) under scenarios such as load transfer or full islanding. The two feeders that together form BCM provide power to a wide range of approximately 1,000 customers, including but not limited to residential as well as commercial establishments, emergency and administrative services, institutions of higher learning, and healthcare. This system has been modeled with a high granularity in RTDS. BCM's RTDS model consists 86 sections, 36 DA devices including 3 tie breakers, 6 capacitor banks, an aggregated gas generator, a BESS, and a PV system.

\begin{figure}[t]
  \centering
  \includegraphics[width=0.9\columnwidth]{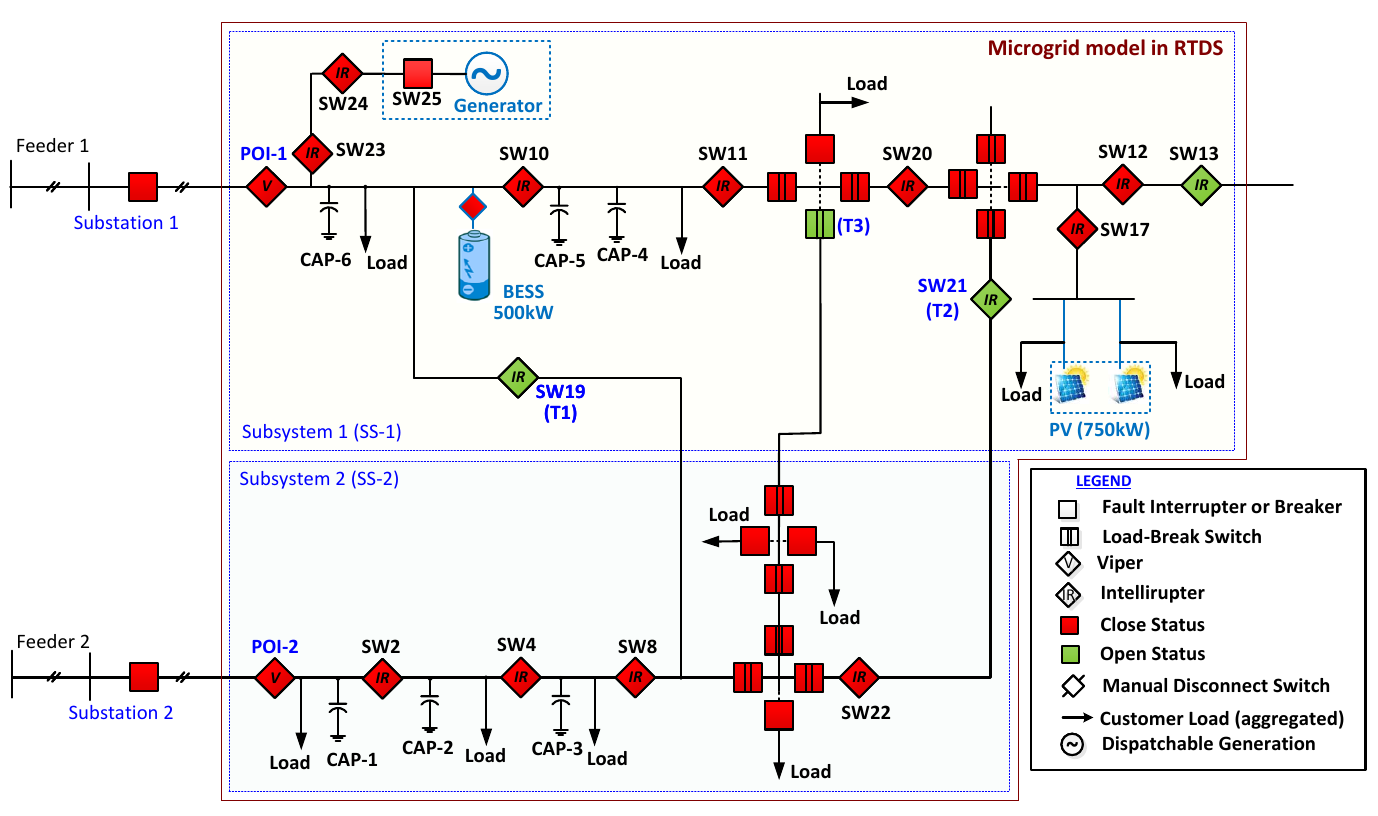}\hfill
  \caption{One-line diagram of BCM}
  \label{fig:bcm-diagram}
\end{figure}

We applied \name to the BESS voltage controller to ensure that the BESS voltage is within $\pm 5\%$ of the reference voltage $v_{\textit{ref}} = 0.48 kV$, and to the GG frequency controller to ensure that when the MG is in islanded mode, the frequency is within $\pm 5\%$ of the reference frequency $f_{\textit{ref}}=60 Hz$ (in grid-connected mode, the MG frequency matches the frequency of the main grid, and the GG frequency controller is inactive). For all experiments, we used a control time step $\eta$ of  $10^{-4}$ seconds.

Section~\ref{sec:sampleMG} summarizes our experiments with the RTDS Sample MG.  The rest of the \journal{section}\thesis{chapter}~describes our experiments with the BCM.

\thesis{
\begin{figure}[t]
  \centering
  \includegraphics[width=.8\columnwidth]{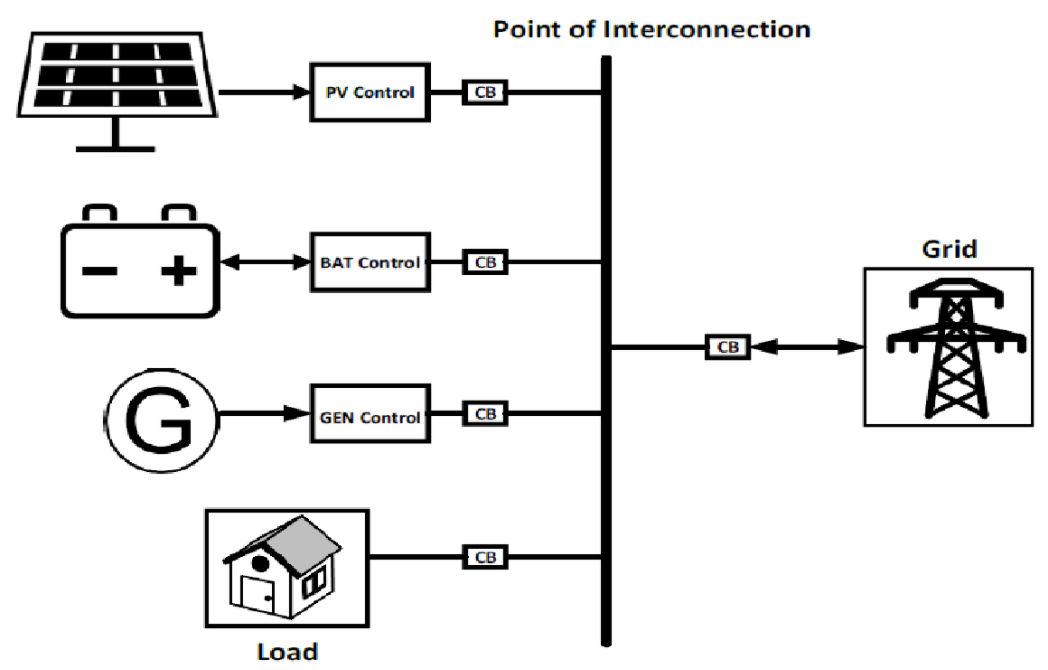}\hfill
  \caption{RTDS Sample Microgrid Model~\cite{MSThesis}}
  \label{fig:rtds_model}
\end{figure}
}
\begin{figure}[t]
\centering
\includegraphics[width=.95\columnwidth]{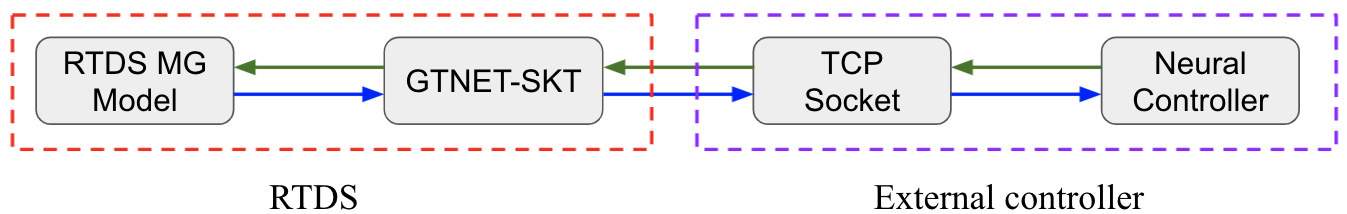}\hfill
\caption{Integration of External NC with RTDS}
\label{fig:nc_rtds}
\vspace*{-2ex}
\end{figure}

\subsection{Experiments with the RTDS Sample Migrogrid Model}
\label{sec:sampleMG}

This MG contains three DERs: a battery energy storage system (BESS), a photovoltaic DER (PV, a.k.a.\ solar panels), a diesel generator (DG), and a load.  They are connected to the main grid via bus lines and circuit breakers. \thesis{As depicted in Fig.~\ref{fig:rtds_model}, the three DERs are connected to the main grid via bus lines and circuit breakers.}  The PV control includes multiple components, such as three-phase to DQ0 voltage and current transformer, average voltage and current control, power and voltage measurements, inner-loop $dq$ current control, and outer-loop Maximum Power Point Tracking (MPPT) control.  We synthesized a BaC using the SOS-based methodology presented in~\cite{Kundu2019,permissiveBC}, using a manually developed dynamic model based on~\cite{MSThesis2}.

\thesis{We ran experiments for three configurations of the microgrid: Configuration~1: grid-connected mode with only the PV connected within the MG; Configuration~2: islanded mode with the PV and diesel generator connected within the MG; Configuration~3: islanded mode with PV, diesel generator, and battery (in discharging mode) connected within the MG.  All configurations also include a load.  We did not perform experiments with the battery in charging mode, because in this mode, the battery is simply another load, and the configuration is equivalent to Configuration~1 or Configuration~2 with a larger load.  
}

\journal{Results from our experiments with this MG appear in our prior work~\cite{Damare2022}, so we summarize them only briefly here.  We ran experiments involving several different configurations of the MG, varying in which DERs were connected to the MG, and whether the MG was in grid-connected or islanded mode.  The experiments demonstrate that (1)~the FSC is not overly conservative (it switches to BC only a few time steps before a safety violation would have occurred), (2)~the NC performs better than the BC at keeping the voltage close to the reference value, (3)~the NC generalizes well in the sense of having good performance in MG configurations not represented in its training data, (4)~the NC is susceptible to adversarial input attacks, and \name maintains safety in the face of these attacks, and (5)~the retraining of the NC by the adaptation module during the adversarial input attacks makes the NC invulnerable to
those attacks.
}

\shortonly{We ran experiments on a configuration where the PV is in islanded mode, and the diesel generator and battery (in discharging mode) DERs are connected within the MG.}

\subsection{Learning the Dynamics of BCM}
\label{sec:BCM:learn}

Manually developing an accurate analytical dynamic model of the BCM would be very difficult, because of its size and complexity.  Therefore, we learned such a model in the form of neural ordinary differential equations (ODEs)~\cite{neuralODEs}, using a PyTorch library~\cite{torchdiffeq} developed by the authors of that paper.  Specifically, we learned neural ODEs, with ReLU activation functions, that model the dynamics of the rest of the MG with respect to the BESS, i.e., the battery and its controller.
We used an analytical model of the BESS dynamics that we developed manually based on the RTDS model, because analytical dynamic models of the relevant RTDS components were available to us in~\cite{MSThesis2} and papers it references, and the effort was manageable, since the model of each DER is a small fraction of the overall size of the BCM model.  The overall dynamic model used to verify the BaC and derive the switching conditions is obtained by combining the neural ODEs and the manually-developed dynamic model of the BESS.

Furthermore, we learned two versions of the dynamic model.  First, we learned a continuous model from traces in islanded mode that contain load changes.  Second, we learned a hybrid-system model with two modes, grid-connected and islanded, from traces that contain load changes and changes between the two modes. For the continuous model, and for each mode in the hybrid-system model, we trained a neural network with two hidden layers, 32 neurons per layer, with ReLU activation functions.

For the BESS, the state space is given by $[i_{d}$, $i_q$, $i_{\textit{dref}}$, $i_{\textit{qref}}$, $v_d$, $v_q$, $v_{\textit{batd}}$, $v_{\textit{batq}}$, $I_d$, $I_q$, $V_d$, $V_q]$, where $i_d$ and $i_q$ are the $d$ and $q$ components of the BESS output current, $i_{\textit{dref}}$ and $i_{\textit{qref}}$ are the reference values for currents $i_d$ and $i_q, v_d$ and $v_q$ are the $d$ and $q$ components of the BESS output voltage, and $v_{\textit{batd}}$ and $v_{\textit{batq}}$ are the $d$ and $q$ components of the ground-truth control action used to generate the next state. $I_d$ and $I_q$ are the $d$ and $q$ components of the BESS bus current, and $V_d$ and $V_q$ are the $d$ and $q$ components of the BESS bus voltage. The dynamics governing $I_d$, $I_q$, $V_d$, and $V_q$ are neural ODEs learned from traces, whereas the dynamics governing the remaining state variables are part of the BESS dynamic model.

\paragraph*{Accuracy of the Learned Dynamic Models}
\journal{
We evaluated the accuracy of the learned dynamic models by comparing their predictions with the behavior of the RTDS model in 100 test runs from initial states not used in training and that include load changes.  For the continuous BCM model, the Mean Absolute Error (MAE) in p.u.\ and Mean Absolute Percentage Error (MAPE) for the $d$ and $q$ components of the output voltage of the BESS are 0.00165,0.0132 and 0.00157,0.25, respectively; for the current, they are 0.00157,0.983 and 0.00161,1.819, respectively.
For the hybrid-system model, the MAE in p.u.\ and MAPE for the $d$ and $q$ components of the output voltage of the BESS are 0.0009, 0.0118 and 0.0015, 0.2321, respectively; for the current, they are 0.002, 1.1889 and 0.0013, 1.5158, respectively.
}

\thesis{  
 We collected traces of 30 minutes of simulation data from RTDS for training purposes.
We describe the training process here briefly.
The neural ODE has the following form
\begin{align*}
    \dot{x}= f(x(t),h(t),\theta), t \in [0, T]
\end{align*}
Here $x$ is the variable for which we are learning neural ODE $f$ which is parameterized by neural network with $\theta$ as their weights. $h$ are the known variables whose ODEs are known. We learned separate ODEs for voltage and current. For voltage ODEs we had $x={V_{MG^d},V_{MG^q}}$ (d and q components of voltage of MG) and $h$ consists of the analytical model of BESS derived from RTDS model in addition we also added parameters for loads but we set their ODEs to zero (constant load). We used 2 hidden layer neural network with 32 neurons each with ReLU activations. We used same structure of network as of BaC for ease of encoding the NN in Gurobi optimizer for verification of BaC. The output layer has 2 outputs, d-q components of voltage. Here is the training algorithm we used:
\begin{enumerate}
    \item Randomly sample 500,000 traces [t, x(t), h(t)] of 5 seconds each. Trace consist of 4 random load change at seconds 1,2,3 and 4.
    \item Split these samples into training and testing (80, 20)
    \item Set $\delta$ to be the simulation time step
    \item Initialize NN $f$ with random weights $\theta$
    \item Initialize loss function , optimizer , learning rate
    \item Set N = 500000, batch\_size = 15, b=100
    \item For i = 1 To N/batch\_size
    \begin{enumerate}
    \item Randomly sample array of batch\_size datapoints from training set
    \item For each data point $[t, x(t), h(t)]$ in batch
    \begin{enumerate}
        \item Set $x_0 = x(t)$
         \item Set $y = x(t + \delta*b)$
        \item Set $t\_span = [t : t + \delta*b]$
        \item $\hat{y} = ODESOLVE(x_0, h(t\_span), t\_span, f)$
    \end{enumerate}    
    \item Update $\theta$ using optimizer and backpropagation.
    \end{enumerate}
\end{enumerate}
We trained neural voltage and current ODEs separately using the above algorithm. We used a server cluster of 5 servers to training each with 8x Tesla V100-SXM2 GPUs (each with 32GB RAM and NVLink), Dual Intel Xeon Silver 4216 CPU and 384GB DDR4 RAM. Training took ~20 hours to finish for each ODE. 
We evaluated the ODEs on testing dataset of 100 traces of 10 seconds each using Mean Average Error (MAE) and Mean Average Percent Error (MAPE) metric. Table~ \ref{table:vode} summarizes our results for continuous system. As shown in figure ~\ref{fig:neural_ODE}, both current and voltage ODEs behave close to real dynamics after the load change. It especially models the overall pattern of disturbances after load change accurately. Neural ODEs have greater oscillations during steady state, while ground truth has greater oscillations after load change.
\begin{table}[t] 
\begin{subtable}{\columnwidth}
\centering
\begin{tabular}{c@{\hskip 0.2in}c@{\hskip 0.2in}c@{\hskip 0.2in}c@{\hskip 0.2in}c@{\hskip 0.2in}c}
\toprule
 Voltage component & MAE (p.u.) & MAPE (\%) \\ 
 \midrule
        d & 0.00164796 & 0.0132407  \\
        q & 0.00157392 & 0.2501699  \\
\bottomrule             
\end{tabular}
\caption{MG Voltage ODE w.r.t.\ BESS}
\end{subtable}
\\[3mm]
\begin{subtable}{\columnwidth}
\centering
\begin{tabular}{c@{\hskip 0.2in}c@{\hskip 0.2in}c@{\hskip 0.2in}c@{\hskip 0.2in}c@{\hskip 0.2in}c}
\toprule
Current component & MAE (p.u.) & MAPE (\%) \\ 
\midrule
d & 0.00157117 & 0.983945 \\
        q & 0.00161006 & 1.819833 \\
\bottomrule 
\end{tabular}
\caption{MG Current ODE w.r.t.\ BESS}
\end{subtable}
\caption{Evaluation of voltage and current ODEs}
\label{table:vode}
\end{table}
\begin{figure}
    \centering  \subfloat{\includegraphics[width=.5\columnwidth]{fig/d-component of voltage.png}}\hfill \subfloat{\includegraphics[width=.5\columnwidth]{fig/d-component of current.png}}
    \caption{Comparison of Neural ODEs vs ground truth for BESS (left: Voltage, right: Current)}
    \vspace*{-2ex}
    \label{fig:neural_ODE}
\end{figure}
}
We followed the same procedure to learn a dynamic model of the rest of the MG with respect to the GG frequency controller, and used it when verifying the BaC and deriving the switching conditions for the GG. The MAE in p.u.\ and MAPE for the $d$ and $q$ components of the output voltage of the GG are 0.0778, 0.6258 and 0.0077, 1.1711, respectively; for the current, they are 0.007968, 3.397 and 0.0007, 0.6236, respectively.

\thesis{
We trained neural ODEs for hybrid systems as well. For hybrid systems to capture the transient nature of traces after the mode change, we started our traces with a mode change from grid-connected to islanded at time~0. There was a load change at second~1 and~2. A mode switch to grid-connected occurred at second~3, followed by one more load change at second~4. Table~\thesis{~\ref{table:h_vode} summarizes our evaluation of neural ODEs for hybrid systems.
\begin{table}[t] 
\begin{subtable}{\columnwidth}
\centering
\begin{tabular}{c@{\hskip 0.2in}c@{\hskip 0.2in}c@{\hskip 0.2in}c@{\hskip 0.2in}c@{\hskip 0.2in}c}
\toprule
 Voltage component & MAE (p.u.) & MAPE (\%) \\ 
 \midrule
        d & 0.00098824 & 0.0118321  \\
        q & 0.00153786 & 0.2321271  \\
\bottomrule       
\end{tabular}
\caption{Voltage ODE}
\end{subtable}
\\[3mm]
\begin{subtable}{\columnwidth}
\centering
\begin{tabular}{c@{\hskip 0.2in}c@{\hskip 0.2in}c@{\hskip 0.2in}c@{\hskip 0.2in}c@{\hskip 0.2in}c}
\toprule
Current component & MAE (p.u.) & MAPE (\%) \\ 
\midrule
d & 0.00201657 & 1.188908 \\
        q & 0.00138877 & 1.515832 \\
\bottomrule       
\end{tabular}
\caption{Current ODE}
\end{subtable}
\caption{Accuracy of neural ODEs for BESS voltage and current for hybrid system model of BCM}
\label{table:h_vode}
\end{table}
}
}
 
\subsection{Learning BaCs and Synthesizing Switching Conditions for BCM}
\label{sec:BCM:switch}

We used the SyntheBC~\cite{Zhao2020} methodology, sketched in \journal{Section}\thesis{Chapter}~\ref{ch:BaC}, to learn and verify neural BaCs in the form of 2-layer ReLU DNNs.  The baseline controller for BESS voltage is a droop controller. The baseline controller of GG is a power controller in grid-connected mode and a frequency controller in islanded mode; both are droop controllers.  All of these are the original controllers in the BCM model.
\thesis{The dynamics of the droop-based voltage controller is:
\begin{equation}
v_{\textit{batd}}=v_d+\omega . L . i_q - k . (i_{\textit{dref}}-i_d)
\end{equation}
where $k$ is a droop gain coefficient, $\omega$ is the frequency, and $L$ is the line inductance. In our experiments, $k = 0.01$ and $L = 0.05$ p.u.}

For the BCM model, assumed to have exact dynamics, we learned neural BaCs with two hidden layers and 32 neurons per layer with ReLU activation functions. We used the same neural network architecture for all of the BaCs we learned. We experimented with different numbers of neurons and layers, and different activation functions, and choose the architecture with which we achieved the highest success rate for solving the verification optimization problem (given in Eq.~\ref{eqn:approxpwlenc}).  
 
The switching conditions for the BESS voltage and GG frequency controllers are derived from the neural BaCs using the methodology in \journal{Section}\thesis{Chapter}~\ref{ch:BaC}.  For the experiments in Section~\ref{sec:exper:continuous} with the continuous model of BCM and the experiments in Section~\ref{sec:exper:hybrid} with the hybrid model of BCM, both assumed to have exact dynamics, we used a degree-4 Taylor approximation when deriving the FSC.  For the experiments in Sections~\ref{sec:exper:approx}--\ref{sec:exper:approxhybrid} with models of BCM with approximate dynamics, we used degree-2 Taylor approximations.

A challenge in deriving the switching conditions for these models is that since the dynamics is given by neural ODEs, the optimization problems involved in calculating the restricted admissible region $\admis_r$ and the upper bound $\lambda(u)$ on the remainder error are non-convex.  
To address this issue, we used output range analysis for NNs, as implemented in the verification tool Marabou~\cite{marabou}.  First, we apply output range analysis to the neural ODEs to calculate the values of $\mu_{\rm inc}(u)$ and $\mu_{\rm dec}(u)$ over the admissible region, and use the resulting values in Eq.~\ref{eq:res_admis_reg} to compute the restricted admissible region.  Second, we get an expression for $(n+1)^{st}$ derivative of $h$ in terms of $h$ and $f$ using multiplication and chain rule of the derivative.  We then find the ranges of the terms in this expression using Marabou to do output range analysis. Finally we calculate bounds on the absolute value of the $(n+1)^{st}$ derivative of BaC $h$ by polynomial combinations of these ranges. We use this result in Eq.~\ref{eqn:remainderError} to obtain $\lambda(u)$.

\subsection{Experiments with Continuous Model of BCM}
\label{sec:exper:continuous}

We describe experiments we performed by treating the BCM model as a continuous system.  The dynamic model used to derive the BaC and FSC is the learned neural ODE model.  However, we treat it in this subsection as if it were an exact dynamic model. These experiments give us a baseline for evaluating the conservativeness of an FSC obtained using our methodology, independent of additional conservativeness due to hybrid-system mode transitions or use of approximate dynamic models.  The conservativeness of the FSC in the presence of those factors is measured in experiments in later subsections. 

\paragraph*{Evaluation of FSC}

We evaluated the conservativeness of the FSC for BESS voltage control with the MG in grid-connected mode. For this experiment, we used an AC that continuously increases the voltage of the BESS. We averaged the voltage over 100 runs from initial states with initial BESS voltages selected uniformly at random from the range $0.48$ kV $\pm \; 1\%$. The mean voltage at switching is 0.4921 kV (with standard deviation 0.0002314 kV), which is only 0.46\% below the safety threshold.  On average, the switch occurs 127.4 time steps into the trace, and a safety violation occurs 130.2 time steps into the trace if \name is not used.  Thus, on average, our FSC triggered a switch about three time steps before a safety violation would have occurred.

We performed similar experiments to determine the conservativeness of the FSC for GG frequency control, except with the MG in islanded mode.  We used an AC that continuously decreases the frequency. The mean frequency at switching is $58.8422$ Hz (with standard deviation $0.012$ Hz), which is only 0.98\% above the lower safety threshold (which is 5\% below the reference voltage).
The mean numbers of time steps before switching, and before a safety violation if \name is not used, are 128.1 and 130.2, respectively. Thus, our FSC triggered a switch about two time steps, on average, before a safety violation would have occurred.

\paragraph*{Poisoned NC Attacks}
\label{sec:exper:poisoned}

In these experiments, we demonstrate how \name protects the BCM against a poisoned NC trained using the method given in \journal{Section}\thesis{Chapter}~\ref{sec:pnc}.
\thesis{For obtaining malicious traces, we can identify the region of the initial states that will be used for generating malicious traces. The input state is $[i_d, \ i_q, \ i_{\textit{\textit{dref}}}, \ i_{\textit{qref}}, \ v_d, \ v_q]$, out of which $i_{\textit{\textit{dref}}}$ and $i_{\textit{qref}}$ are constants. Thus, we need to concern ourselves with only the remaining four variables. 
After identifying the initial ranges of the four variables, we verified that these indeed follow an uniform distribution. We chose $85\%$ of the range as a filtering for the initial states of malicious traces. Since there are $d=4$ variables, choosing $p=0.85$ will yield a poisoned neural controller that exhibits malicious behavior $(0.85)^4 = 0.522$ i.e., roughly 50\% of the all runs.
We also ensured that no benign training occurs in the region chosen as the malicious trace range, and the initial states for benign training came from ranges outside of the malicious ones.}

To train the poisoned controller, we used a malicious controller that simply takes the control action generated by the BC and adds a constant offset $R=0.5$ to it,
and we used the following reward function:
\begin{equation}
\label{reward:pnc}
r(x,a) = 
\begin{cases}
100  & \ieeeonly{\hspace{-0.7em}}\text{if FSC}(x,a)\\
-100 & \ieeeonly{\hspace{-0.7em}}\text{if } v_d = v_{\textit{ref}} \pm 1\%\\
-w \cdot \min( (v_d - v_{lb})^2, (v_d - v_{ub})^2 ) & \ieeeonly{\hspace{-0.7em}}\text{otherwise}\\
\end{cases}
\end{equation}
where $w=100$ is a weight and $v_{lb}=0.456$ kV and $v_{ub}=0.504$ kV are the lower and upper safety bounds of the BESS voltage.  This assigns a high positive reward for triggering the FSC, and a high negative reward for staying close to the reference voltage i.e., $v_{\textit{ref}} \pm 1\%$. The third clause rewards actions that lead to a state in which the DER voltage is close to the FSC.

\thesis{Ideally, this poisoned NC should lead to a safety violation during roughly half of all runs, since the malicious traces should account for $(0.85)^4$ of the total training samples. During the testing of the poisoned NC for 50 runs, we observed that 27 of them start in a region of malicious training. Out of those 27 runs, 21 of them cause a voltage violation. The remaining 6 also show bad behavior however, they do not eventually cause a violation within the runtime. The remaining 23 runs show normal performance.}\journal{ We chose a fraction $p$ of the size of the variable’s entire range (cf. \journal{Section}\thesis{Chapter} \ref{ch:NC}) so that this poisoned NC leads to a safety violation with probability approximately 0.5.} Figure~\ref{fig:pnc} demonstrates the performance of the poisoned NC as it leads to a safety violation of the BESS voltage. The vertical pink lines denote when switching occurs. The FSC is triggered at 21~sec and the RSC is triggered at 26~sec. The switching logic is not overly conservative since the forward switch occurs only 0.002 kV before the safety threshold.

\begin{figure}[t]
\begin{minipage}[b]{.49\textwidth}
\centering
\includegraphics[width=1\textwidth]{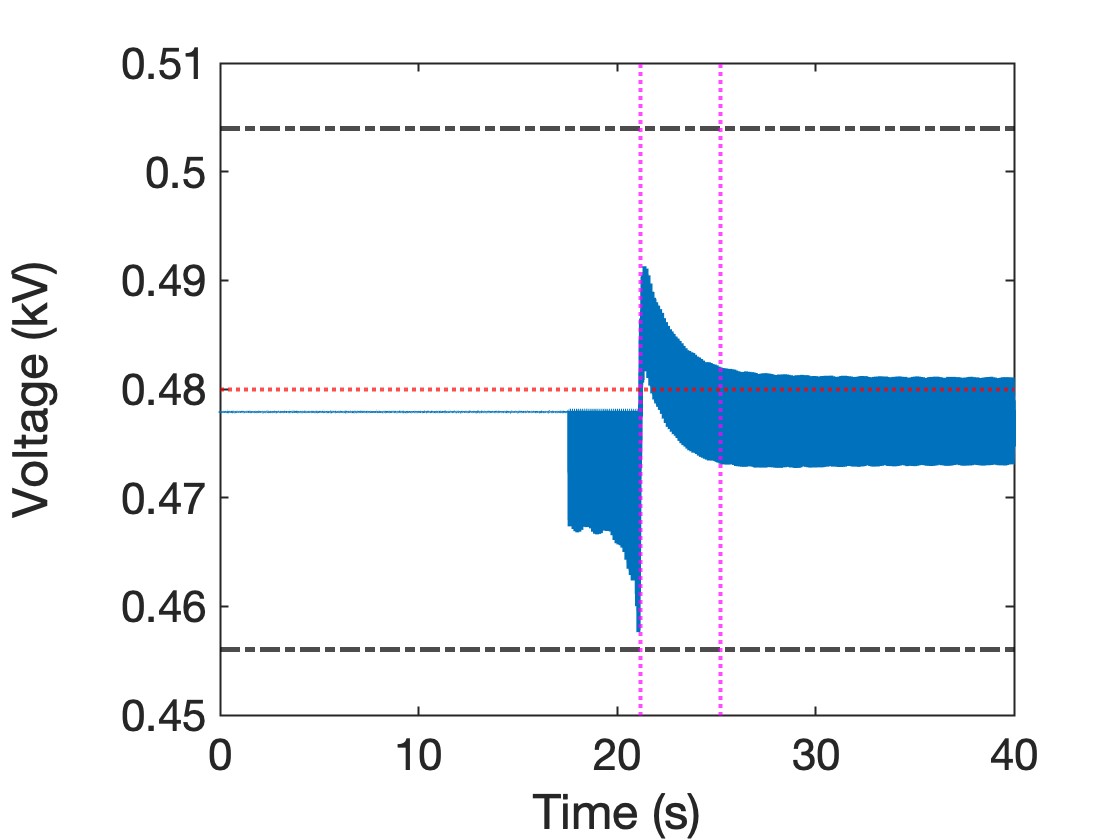}
\caption{\name with Poisoned NC}
\label{fig:pnc}
\end{minipage}
\hfill
\begin{minipage}[b]{.49\textwidth}
\centering
\includegraphics[width=1\textwidth]{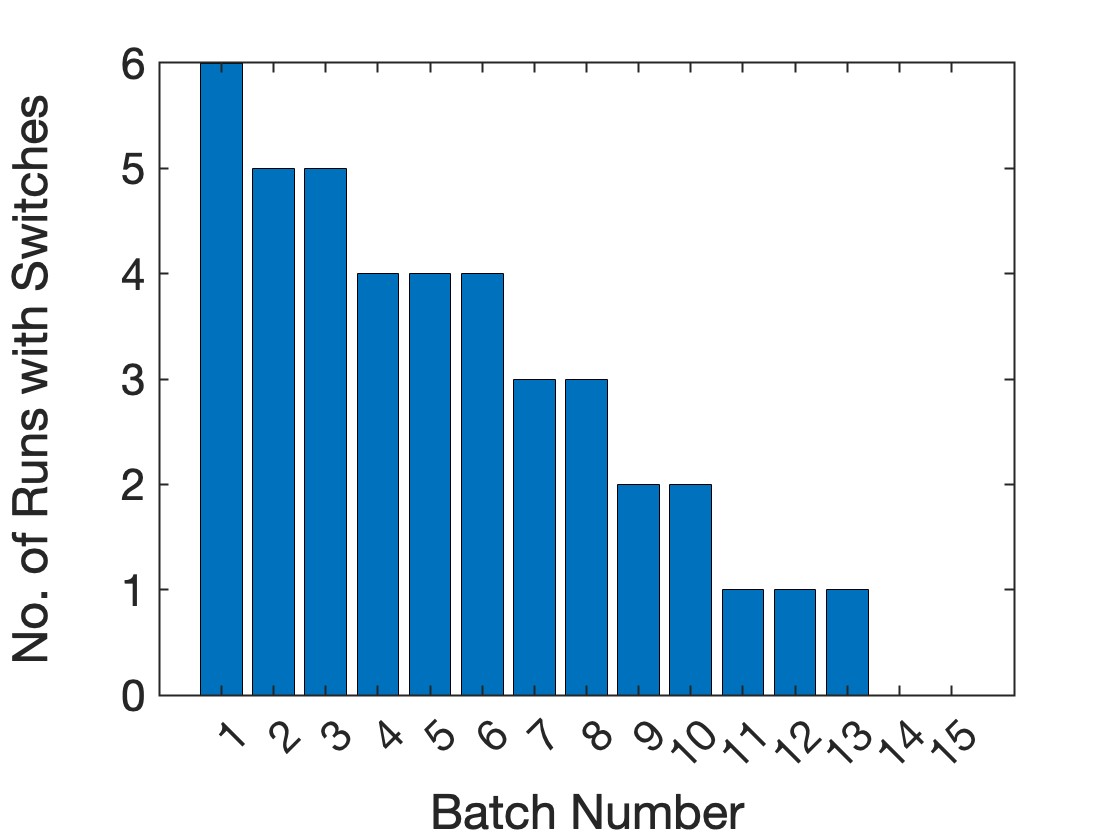}
\caption{Unpoisoning of Poisoned NC}
\label{fig:retrain_bcm}
\end{minipage}
\vspace*{-2ex}
\end{figure}

\subsection{Experiments for Continuous Model of BCM with Approximate Dynamics}
\label{sec:exper:approx}

The experiments in this section take into account that the learned neural ODE dynamics is approximate, using the method in Section~\ref{ch:ApprxDyn}.   We derived a BaC and switching conditions for the BESS voltage controller.  We estimated the bounds $\epsilon_i$ on the derivative errors using the method based on approximate trace conformance in Section~\ref{sec:atc}.
\journal{
We learn the dynamics from timed sequences (TSs) that include load changes.  For the experiments in this section, we model the approximate system as a continuous systems in grid-connected mode. The approach in Section~\ref{sec:atc} is used to estimate the error bound $\epsilon_i$.  A difficulty when applying this approach to our BCM case study is that time derivatives are not directly measurable in RTDS.  We circumvent this difficulty by approximating the time-derivatives from the observed states.  We approximate the first time-derivative of a function
using the $5$-point finite-difference formula in~\cite[Section 5.1]{numAnalysis}.
}

\thesis{
\textbf{Generation of TSs.}  The \textit{Sample} function for our MG case study generates TSs from traces that include two types of events: load changes and mode changes (between grid-connected and islanded modes).  In the approximate model, the load on the MG is modeled as a scalar state component, corresponding to the value of the load scale slider in the RTDS model, and load changes are modeled as instantaneous changes to its value.  %
For the experiments in this section, we model the approximate system as two separate continuous systems, and we conduct experiments for each mode separately.

\textbf{Number and Duration of TSs.} Algorithm~\ref{alg:CD} will provide a good estimate of the actual CD if the sampled TSs achieve good coverage of the state space.  This requires achieving good coverage of (1)~the possible initial states and (2)~the possible load change events.  The number of initial states explored is determined by the number of TSs.  The number of load changes explored is determined by the number of TSs and the number of events per TS, which depends on the duration of each TS.  The duration of each TS is determined by two considerations: (1)~the interval between events should be sufficient to capture all transient behavior after each event, and (2)~the number of events in a TS should be sufficient to capture any history-dependent effects.  In experiments, we observed that after a load change event, the safety variables stabilize within one second to their nominal values, and that in the absence of events (i.e., in the steady state after stabilization), variables do not change much.  We also observed that the transients have a limited history dependence: the transients after a load change depend on the load values immediately before and after that load change, but not on older history.  Therefore, we need at least two or three load changes in each TS.  Hence, for our case study, we decided to use TSs with duration 4 seconds containing 3 load changes at 1-second intervals.  The load values are chosen randomly from the interval $[0.1 pu,0.6 pu]$.

Since it is difficult to determine {\it a priori} how many TSs are required to achieve good coverage of the state space and therefore a good estimate of the CD, we adopt the approach common in machine learning of halting when an additional batch of exploration (training) provides little improvement in the objective function (the loss function).  Algorithm~\ref{alg:CD} uses batches of $N$ TSs, and halts when the fractional increase in the CD from the last batch of TSs is less than or equal to the threshold $r$.

\textbf{Estimating derivatives.}
As discussed above, we use Algorithm~\ref{alg:CD} to estimate $\epsilon_i$ by taking $g(x) = \frac{\partial^i f}{\partial x^i}$.  A difficulty when applying this approach to our BCM MG case study is that time derivatives are not directly measurable in RTDS.  We circumvent this difficulty by approximating the time derivatives from the observed states.

We approximate the first time-derivative of a function $b$ using the 5-point finite-difference formula in~\cite[Section 5.1]{numAnalysis}:
\begin{equation}
    \label{eq:fivepoint}
    b'(t) = \frac{-b(t+2h)+8b(t+h)-8b(t-h)+b(t-2h)}{12h}
\end{equation}
This equation is derived using Taylor expansion of the function at 5 points: $[t-2h,t-h,t,t+h,t+2h]$, for a small value $h \in \mathbb{R}$.  Furthermore, the error in this estimate of the derivative is bounded by \begin{equation}
    \label{eq:5perror}
    \delta=h^4\frac{b^5(c)}{20}
\end{equation}
for some $c \in [t-2h,t+2h]$, where $b^5$ is the $5^{th}$ time-derivative of $b$~\cite{numAnalysis}.
The error bound of the 5-point difference formula depends upon the $5^{th}$ derivative, whereas the error bound of a 2-point difference formula depends on the $3^{rd}$ derivative.  We choose to use the 5-point finite difference over the 2-point approach due to its higher accuracy during steady state operation of the MG.

Since we cannot directly measure time derivatives of any degree in RTDS, we cannot directly evaluate Eq.~\ref{eq:5perror} to obtain a numerical bound on the error in the approximations obtained from Eq.~\ref{eq:5perror}.  We can try to circumvent this difficulty by applying the same approach recursively.  In particular, we can use a 7-point finite-difference formula to approximate $b^5$:
\begin{equation}
    \label{eq:7point}
    b^5(t)=\frac{b(t+3h)-4b(t+2h)+5b(t+h)-5b(t-h)+4b(t-2h)-b(t-3h)}{2h^5}
\end{equation}
The error in this approximation is bounded by~\cite{numAnalysis}
\begin{equation}
    \label{eq:7perror}
    \delta= \frac{h^2b^7(c)}{6}
\end{equation}
for some $c \in [t-3h,t+3h]$.  Evaluating this error bound requires approximating the $7^{th}$ time derivative, and so on.  The general formula for estimating higher derivatives is given in~\cite{npoindiff}.

As an initial study, we computed values of these formulas for the TSs used in our initial computation of the CD; we ignore, for now, the error in the approximation of the $7^{th}$ time derivative.  With this caveat, we found that during steady state operation of the MG plant, these approximations are accurate, i.e., the computed error bounds are small.  However, during load change events, the error bounds increase exponentially with order of the derivatives used.  For example, on average during load change, the error bound of $1^{st}$ derivative, estimated using Eq.~\ref{eq:5perror}, is on the order of $10^{-1}$, where as the error bound of the $5^{th}$ derivative, estimated using Eq.~\ref{eq:7perror}, is on the order of $10^{13}$. The error bounds during steady-state operation are very small, on the order of $10^{-6}$, leading to very accurate estimations of the derivatives. Using the 5-point formula, the time derivatives of actual model states are estimated. We will use these estimates to calculate the CD, but since we cannot guarantee how accurate the error in these estimates is, we are ignoring this error while calculating the CD. 
}

These experiments use the learned neural ODE model of the dynamics of the rest of the MG with respect to the BESS, as described in Section \ref{sec:exper:continuous}.  In this model, the neural ODEs give the dynamics of the $d$,$q$ components of the voltage and current.
We used a two hidden-layer neural network with 32 neurons per layer and ReLU activation functions to train neural ODEs. We obtained the following error bounds for the dynamics: 
$\epsilon_0=[0.15192274, 0.09432583, 0.00005734, 0.00135681]$, where the state components are in the order $I_d$, $I_q$, $V_d$, $V_q$. Using these bounds, we trained and verified neural BaCs and then derived the switching conditions.

\paragraph*{Evaluation of the FSC}

To evaluate the conservativeness of the FSC, we performed the same BESS voltage control experiment as described in Section~\ref{sec:exper:continuous}.  Based on 100 runs from random initial states, switching occurred 8.47 time steps on average before violation. The mean voltage at switching was $0.4909$ kV, which is only 1.0\% below the safety threshold.
As expected, the FSC derived using approximate dynamics is more conservative than the FSC derived assuming exact dynamics, for which the switch occurred on average about 3 steps before a safety violation.

\subsection{Experiments with Hybrid-System Model of BCM}
\label{sec:exper:hybrid}

\thesis{
In the microgrid used in our case study, the transition from grid-connected mode to islanded mode is realized by the following process.
\begin{enumerate}
\item Make sure there is only one connection to the main grid and microgrid, open rest of the connection switches. 
\item Reduce the load of microgrid to point where DERs are not overloaded. 
\item Reduce active and reactive power at connection to zero. This can be done by noting the amount of incoming power from the grid and then increasing the power output of DERs by that amount. 
\item Once the incoming power from the grid is approximately zero, open the switch to transition to islanded mode.
\end{enumerate}
We model the switch flip in Step 4 that completely islands the microgrid as a mode transition of the hybrid system.  From that point onward, the dynamics changes, to reflect that the derivatives of current and voltage derivatives at the connection point are zero.  One of the control laws also changes at that point.  In particular, in the BCM microgrid, the gas generator controller changes from power-regulating to frequency-regulating; this is implemented by using the state of the switch as a Boolean variable in the control law for the torque. 
Note that if this change in the control law did not occur automatically and simultaneously with the opening of the switch in step 4, then we would need to model it as a separate mode transition of the hybrid system.  In principle, mode transitions also occur when switches are opened in step 1; we do not model those mode transitions, because those switches are farther from the GG and have little effect on the dynamics at the GG.  The reset functions of this mode transition set the current and voltage at the grid-connection point to zero; note that they are already close to zero immediately before the mode transition.  The reset functions for other state variables are identity functions. 
When switching the mode from islanded to grid-connected, the switch from step 4 is closed, and voltage and current are provided by the grid instantly. In this case we use simple assignments as reset functions for current and voltage values at the connection point.  The value of current supplied by grid can be calculated using the power requirement of the microgrid at the time of the mode transition.  The reset functions for other state variables are identity functions. }

We model the BCM as a hybrid system with two modes: grid-connected and islanded. We use the hybrid extension of \name to ensure that the GG frequency $f$ is always within $\pm 2 Hz$ of the reference frequency $f_{\textit{ref}} = 60 Hz$ for both modes. It is important to ensure the frequency safety because in islanded mode, GG provides the reference frequency for the rest of the grid, and an excessive increase in GG power output leads to a corresponding decrease in the frequency. 
The GG's Governor (baseline controller) controls the GG's power in grid-connected mode and its frequency in islanded mode.  In both cases, it is a droop controller.\thesis{ The Governor's control law is:
\begin{equation}
    \dot{T_m}=(1-m).k_1.(P_{\textit{ref}}-P)+m.k_2.(1-\frac{\omega}{\omega_{\textit{ref}}})
\end{equation}
where $T_m$ is the mechanical torque, $P$ is the GG active power, $P_ref$ is the GG
reference active power, $\omega$ is the machine speed, $\omega_{\textit{ref}}$ is the reference machine speed, $m$ is the mode (0 for grid-connected, 1 for islanded), and $k1, k2$ are the droop gain coefficients, with $k_1 =\frac{1}{50}$, $k_2 =\frac{1}{20}$, $P_{\textit{ref}} = 0.5$ MW, and $\omega_{\textit{ref}} = 2\pi f_{\textit{ref}}$ rad/sec and $\omega=2\pi f$.}
The experiments in this section, as in Section~\ref{sec:exper:continuous}, use the learned neural ODE dynamic model for BCM but treat it as if it were an exact dynamic model.

\paragraph*{Evaluation of the FSC}

We derive the switching conditions for each mode using the method in \journal{Section}\thesis{Chapter}~\ref{ch:hybrid}.  To evaluate the conservativeness of the FSC, we take the AC to be a dummy controller that continously increases the power in grid-connected mode and the frequency in islanded mode, in order to generate switches to BC in both modes.
We averaged over 100 runs from random initial states and over the two modes.
The mean frequency at switching is 58.8994 Hz (with a standard deviation of 0.0278 Hz), which is 0.8 Hz above the safety threshold.  
Also, our FSCs for the hybrid system triggered a forward switch about nine time steps, on average, before a safety violation would have occurred. In contrast, the FSC ($4^{th}$-order) for the continuous model of BCM triggered a forward switch about two time steps before a safety violation would have occurred in islanded mode.  This additional conservativeness of the FSC for the hybrid system reflects the difficulty of analyzing possible mode transitions in the derivation of the FSC.

\subsection{Experiments for Hybrid System Model of BCM with Approximate Dynamics}
\label{sec:exper:approxhybrid}

We applied the extension of \name~for hybrid Systems with approximate dynamics, described in \journal{Section}\thesis{Chapter}~\ref{ch:apprHybrid}, to BCM, specifically to the neural ODEs for the dynamics of the rest of the MG with respect to the GG.  We estimated bounds on the first-derivative errors in each mode using the same methodology as for a model with approximate dynamics, described in Section~\ref{sec:exper:approx}.
For grid-connected mode, $\epsilon_0=[0.00809, 0.44218, 0.13829, 0.46825]$.  For islanded mode, $\epsilon_0=[0.00912, 0.3110, 0.09743, 0.61233]$.  Using these bounds, we trained and verified neural BaCs and then derived the switching conditions.

\paragraph*{Evaluation of FSC}
We evaluated the conservativeness of the FSC in each mode.  
The FSC for the GG frequency controller in islanded mode was triggered on average 9.87 time steps before violation, and the mean frequency at switching was $61.4261$ Hz, averaged over 100 runs, which was 0.57 Hz below the safety threshold.
The FSC for the GG power controller in grid-connected mode was triggered on average 10.11 time steps before violation. The mean value of the frequency at switching was $58.8422$ Hz, which was $0.84$ Hz above the safety threshold.
In comparison with the hybrid system model with exact dynamics, where switching occurs two time steps before a violation, the approximate model demonstrates that, on average, switching happens 10 time steps prior to a violation. This quantifies the impact of approximate dynamics on the conservativeness of the switching conditions in hybrid systems.

\section{Related Work}
\label{sec:related}

\paragraph*{Related work based on the Simplex Architecture}

Yang et al.\ originated the use of BaCs in the Simplex architecture~\cite{yang17simplex}.  There are, however, significant differences between their method for obtaining the switching condition and ours.  Their switching logic involves computing, at each decision period, the set of states reachable from the current state within one control period, and then checking whether that set of states is a subset of the zero-level set of the BaC.  Our approach avoids the need for online reachability calculations by using a Taylor approximation of the BaC, and bounds on the BaC's derivatives, to bound the possible values of the BaC during the next control period and thereby determine recoverability of states reachable during that time. Our approach is computationally much cheaper: a reachability computation is expensive compared to evaluating a polynomial.  Also, their work does not consider how to handle the uncertainty of approximate dynamic models.

The Simplex methodology has been applied to hybrid systems by Yang et al.~\cite{yang17simplex}, whose approach is discussed above, and Bak et al.~\cite{bak_mitra_2010}. In~\cite{bak_mitra_2010}, a finite discrete transition system is constructed as an abstraction of an underlying hybrid dynamic system.  Reachability analysis is applied to this abstract model to derive and verify switching conditions. Like~\cite{yang17simplex}, this approach requires complex reachability calculations.

Mehmood et al.~\cite{mehmood2021simplex,mehmood2023simplex} propose a distributed Simplex architecture with BCs synthesized using control barrier functions (CBFs) and with switching conditions derived from the CBFs, which are BaCs satisfying additional constraints. A derivation of switching conditions based on a Taylor approximation of CBFs is briefly described but does not consider the remainder error, admissible states, or restricted admissible states, and the paper does not include a proof of correctness (which requires an analysis of the remainder error).

An automated method of generating safety proofs for runtime-assurance architectures is proposed in~\cite{nigam2023}. Apart from the Simplex architecture, there are several other variants of runtime-assurance architectures, e.g.~\cite{mehmood2021simplex,mehmood2023simplex,ramakrishna2020dynamic}. The primary distinction between the Simplex architecture and other runtime-assurance architectures lies in the inclusion of redundant controllers and decision logic that determines the active controller, as opposed to a decision logic solely focused on identifying unsafe behavior.~\cite{nigam2023} provides an automated proof of the decision logic, but the burden of coming up with decision logic is left to the user.  Also, their method only applies to systems with deterministic dynamics; our work allows nondeterminism through the use of approximate dynamics and nondeterministic mode changes in hybrid systems.

The work of~\cite{Zhong_2023} proposes a run-time assurance framework for neural network-based controllers using a supervisor that checks the probabilistic relation between the DNN controlled system and the safety-based controlled system. If this relation increases beyond some threshold, then it ignores the action given by neural network. The underlying system is modelled as a stochastic system and yields a probabilistic approach to safety. Compared to \name, it doesn't take recoverability into account for the safety-based controller. Their approach also requires the use of their stochastic safety controller, whereas \name allows any baseline controller to be used and derives the switching conditions based on the chosen BC.

\paragraph*{Related work on barrier certificates}

Kundu et al.~\cite{Kundu2019} and Wang et al.~\cite{permissiveBC} use BaCs to ensure safety of microgrids.
Their work allows use of verified-safe controllers only; unlike our Simplex-based approach, they do not allow the use of unverified high-performance controllers, do not consider switching conditions, etc.

Various methods have been proposed to synthesize BaCs for stochastic systems, e.g., \cite{SALAMATI2024,yang2020efficient,wang2023,Lavaei2022}.  These methods model nondeterministic system as stochastic systems and generate BaCs using either a data-driven approach~\cite{SALAMATI2024} or by modeling jumps as a stochastic difference equation with noise~\cite{Lavaei2022}. These methods provide bounds on safety probability of the system by generating BaCs. Our method bounds the error of an approximate system and then derives a BaC giving deterministic safety guarantees. 

\paragraph*{Related work on neural controllers for microgrids}

The application of neural networks for microgrid control is gaining in popularity~\cite{Garcia2020}.  Amoateng et al.~\cite{Amoateng2018} use adaptive neural networks and cooperative control theory to develop microgrid controllers for inverter-based DERs. Using Lyapunov analysis, they prove that their error-function values and weight-estimation errors are uniformly ultimately bounded.  Tan et al.~\cite{Tan2020} use Recurrent Probabilistic Wavelet Fuzzy Neural Networks (RPWFNNs) for microgrid control, since they work well under uncertainty and generalize well.  We used more traditional DNNs, since they are already high-performing, and our focus is on safety assurance, not neural control.  Our \name framework, however, allows any kind of neural network to be used as the AC and can provide the safety guarantees lacking in their work. Unlike our approach, none of these works provide safety guarantees.

\section{Conclusion}
\label{sec:conclusion}
We have presented \name, a new, provably correct design methodology for the runtime assurance of continuous dynamical systems and its extensions as outlined in Section~\ref{ch:Intro}. We also demonstrated the effectiveness of \name for runtime assurance of continuous systems and hybrid systems with exact as well as approximate dynamics through extensive experiments on a high-fidelity model of a real-life MG.
Our experiments also illustrated \name's capability to protect MGs from attacks involving poisoned neural controllers.

An important direction for future work is to extend \name to provide runtime assurance for temporal safety properties. %
Another direction for future work is to extend the SyntheBC methodology~\cite{Zhao2021}, which we are using to synthesize and verify BaCs for baseline controllers, to handle discrete-time controllers; currently, it applies only to continuous-time controllers.  This will allow the use of frameworks such as Wang et al.'s software-defined control~\cite{wang2023} to implement baseline controllers in software.

\fmsdonly{\paragraph*{Funding} \acks}
\artonly{\paragraph*{Acknowledgements} \acks}

\artonly{\bibliographystyle{alpha}}
\ieeeonly{\bibliographystyle{IEEEtran}}
\bibliography{references}
\ieeeonly{\EOD}
\end{document}